\definecolor{col1}{HTML}{3891A6}
\definecolor{col2}{HTML}{EF5B5B}
\definecolor{col3}{HTML}{3DDC97}
\pgfplotsset{compat=1.15}
\newtheorem{remark}{Remark}
\newtheorem{lemma}{Lemma}
\newcommand{\herm}{^{\mbox{\scriptsize H}}}
\newcommand{\vbar}{\raisebox{.17ex}{\rule{.04em}{1.35ex}}}
\newcommand{\vbarind}{\raisebox{.01ex}{\rule{.04em}{1.1ex}}}
\newcommand{\R}{\ifmmode{\rm I}\hspace{-.2em}{\rm R} \else ${\rm I}\hspace{-.2em}{\rm R}$ \fi}
\newcommand{\T}{\ifmmode{\rm I}\hspace{-.2em}{\rm T} \else ${\rm I}\hspace{-.2em}{\rm T}$ \fi}
\newcommand{\N}{\ifmmode{\rm I}\hspace{-.2em}{\rm N} \else \mbox{${\rm I}\hspace{-.2em}{\rm N}$} \fi}
\newcommand{\B}{\ifmmode{\rm I}\hspace{-.2em}{\rm B} \else \mbox{${\rm I}\hspace{-.2em}{\rm B}$} \fi}
\newcommand{\Hil}{\ifmmode{\rm I}\hspace{-.2em}{\rm H} \else \mbox{${\rm I}\hspace{-.2em}{\rm H}$} \fi}
\newcommand{\C}{\ifmmode\hspace{.2em}\vbar\hspace{-.31em}{\rm C} \else \mbox{$\hspace{.2em}\vbar\hspace{-.31em}{\rm C}$} \fi}
\newcommand{\Cind}{\ifmmode\hspace{.2em}\vbarind\hspace{-.25em}{\rm C} \else \mbox{$\hspace{.2em}\vbarind\hspace{-.25em}{\rm C}$} \fi}
\newcommand{\Q}{\ifmmode\hspace{.2em}\vbar\hspace{-.31em}{\rm Q} \else \mbox{$\hspace{.2em}\vbar\hspace{-.31em}{\rm Q}$} \fi}
\newcommand{\Z}{\ifmmode{\rm Z}\hspace{-.28em}{\rm Z} \else ${\rm Z}\hspace{-.28em}{\rm Z}$ \fi}
\DeclareAcronym{ADMM}{
    short = ADMM,
    long = alternating direction method of multipliers,
    list = Alternating Direction Method of Multipliers,
    tag = abbrev
}
\DeclareAcronym{AoA}{
    short = AoA,
    long = angle-of-arrival,
    list = Angle-of-Arrival,
    tag = abbrev
}
\DeclareAcronym{SISO}{
    short = SISO,
    long = single-input-single-output,
    list = single-input-single-output,
    tag = abbrev
}
\DeclareAcronym{TDMA}{
    short = TDMA,
    long = time division multiple access,
    list = time division multiple access,
    tag = abbrev
}
\DeclareAcronym{MRT}{
    short = MRT,
    long = maximum ratio transmitter,
    list = maximum ratio transmitter,
    tag = abbrev
}
\DeclareAcronym{PDA}{
    short = PDA,
    long = placement delivery array,
    list = placement delivery array,
    tag = abbrev
}
\DeclareAcronym{EE}{
    short = EE,
    long = energy efficiency,
    list = energy efficiency,
    tag = abbrev
}
\DeclareAcronym{MDS}{
    short = MDS,
    long = maximum distance separation,
    list = maximum distance separation,
    tag = abbrev
}
\DeclareAcronym{SIC}{
    short = SIC,
    long = successive-interference-cancellation,
    list = successive-interference-cancellation,
    tag = abbrev
}
\DeclareAcronym{MAC}{
    short = MAC,
    long = multiple-access-channel,
    list = multiple-access-channel,
    tag = abbrev
}
\DeclareAcronym{AoD}{
    short = AoD,
    long = angle-of-departure,
    list = Angle-of-Departure,
    tag = abbrev
}
\DeclareAcronym{BB}{
    short = BB,
    long = base band,
    list = Base Band,
    tag = abbrev
}
\DeclareAcronym{BC}{
    short = BC,
    long = broadcast channel,
    list = Broadcast Channel,
    tag = abbrev
}
\DeclareAcronym{BS}{
    short = BS,
    long = base station,
    list = Base Station,
    tag = abbrev
}
\DeclareAcronym{BR}{
    short = BR,
    long = best response,
    list = Best Response, 
    tag = abbrev
}
\DeclareAcronym{CB}{
    short = CB,
    long = coordinated beamforming,
    list = Coordinated Beamforming,
    tag = abbrev
}
\DeclareAcronym{CC}{
    short = CC,
    long = coded caching,
    list = Coded Caching,
    tag = abbrev
}
\DeclareAcronym{CE}{
    short = CE,
    long = channel estimation,
    list = Channel Estimation,
    tag = abbrev
}
\DeclareAcronym{CoMP}{
    short = CoMP,
    long = coordinated multi-point transmission,
    list = Coordinated Multi-Point Transmission,
    tag = abbrev
}
\DeclareAcronym{CRAN}{
    short = C-RAN,
    long = cloud radio access network,
    list = Cloud Radio Access Network,
    tag = abbrev
}
\DeclareAcronym{CSE}{
    short = CSE,
    long = channel specific estimation,
    list = Channel Specific Estimation,
    tag = abbrev
}
\DeclareAcronym{CSI}{
    short = CSI,
    long = channel state information,
    list = Channel State Information,
    tag = abbrev
}
\DeclareAcronym{CSIT}{
    short = CSIT,
    long = channel state information at the transmitter,
    list = Channel State Information at the Transmitter,
    tag = abbrev
}
\DeclareAcronym{CU}{
    short = CU,
    long = central unit,
    list = Central Unit,
    tag = abbrev
}
\DeclareAcronym{D2D}{
    short = D2D,
    long = device-to-device,
    list = Device-to-Device,
    tag = abbrev
}
\DeclareAcronym{DE-ADMM}{
    short = DE-ADMM,
    long = direct estimation with alternating direction method of multipliers,
    list = Direct Estimation with Alternating Direction Method of Multipliers,
    tag = abbrev
}
\DeclareAcronym{DE-BR}{
    short = DE-BR,
    long = direct estimation with best response,
    list = Direct Estimation with Best Response,
    tag = abbrev
}
\DeclareAcronym{DE-SG}{
    short = DE-SG,
    long = direct estimation with stochastic gradient,
    list = Direct Estimation with Stochastic Gradient,
    tag = abbrev
}
\DeclareAcronym{DFT}{
	short = DFT,
	long = discrete fourier transform,
	list = Discrete Fourier Transform,
	tag = abbrev
}
\DeclareAcronym{DoF}{
    short = DoF,
    long = degrees of freedom,
    list = Degrees of Freedom,
    tag = abbrev
}
\DeclareAcronym{DL}{
    short = DL,
    long = downlink,
    list = Downlink,
    tag = abbrev
}
\DeclareAcronym{GD}{
	short = GD, 
	long = gradient descent,
	list = Gradeitn Descent,
	tag = abbrev
}
\DeclareAcronym{IBC}{
    short = IBC,
    long = interfering broadcast channel,
    list = Interfering Broadcast Channel,
    tag = abbrev
}
\DeclareAcronym{i.i.d.}{
    short = i.i.d.,
    long = independent and identically distributed,
    list = Independent and Identically Distributed,
    tag = abbrev
}
\DeclareAcronym{JP}{
    short = JP,
    long = joint processing,
    list = Joint Processing,
    tag = abbrev
}
\DeclareAcronym{KKT}{
    short = KKT,
    long = Karush-Kuhn-Tucker,
    tag = abbrev
}
\DeclareAcronym{LOS}{
	short = LOS,
	long = line-of-sight,
	list = Line-of-Sight,
	tag = abbrev
}
\DeclareAcronym{LS}{
    short = LS,
    long = least squares,
    list = Least Squares,
    tag = abbrev
}
\DeclareAcronym{LTE}{
    short = LTE,
    long = Long Term Evolution,
    tag = abbrev
}
\DeclareAcronym{LTE-A}{
    short = LTE-A,
    long = Long Term Evolution Advanced,
    tag = abbrev
}
\DeclareAcronym{MIMO}{
    short = MIMO,
    long = multiple-input multiple-output,
    list = Multiple-Input Multiple-Output,
    tag = abbrev
}
\DeclareAcronym{MISO}{
    short = MISO,
    long = multiple-input single-output,
    list = Multiple-Input Single-Output,
    tag = abbrev
}
\DeclareAcronym{MSE}{
    short = MSE,
    long = mean-squared error,
    list = Mean-Squared Error,
    tag = abbrev
}
\DeclareAcronym{MMSE}{
    short = MMSE,
    long = minimum mean-squared error,
    list = Minimum Mean-Squared Error,
    tag = abbrev
}
\DeclareAcronym{mmWave}{
	short = mmWave,
	long = millimeter wave,
	list = Millimeter Wave,
	tag = abbrev
}
\DeclareAcronym{MU-MIMO}{
    short = MU-MIMO,
    long = multi-user \ac{MIMO},
    list = Multi-User \ac{MIMO},
    tag = abbrev
}
\DeclareAcronym{OTA}{
    short = OTA,
    long = over-the-air,
    list = Over-the-Air,
    tag = abbrev
}
\DeclareAcronym{PSD}{
    short = PSD,
    long = positive semidefinite,
    list = Positive Semidefinite,
    tag = abbrev
}
\DeclareAcronym{QoS}{
	short = QoS,
	long = quality of service,
	list = Quality of Service,
	tag = abbrev
}
\DeclareAcronym{QoE}{
	short = QoE,
	long = quality of experience,
	list = Quality of experience,
	tag = abbrev
}
\DeclareAcronym{RCP}{
	short = RCP,
	long = remote central processor,
	list = Remote Central Processor,
	tag = abbrev
}
\DeclareAcronym{RRH}{
    short = RRH,
    long = remote radio head,
    list = Remote Radio Head,
    tag = abbrev
}
\DeclareAcronym{RSSI}{
    short = RSSI,
    long = received signal strength indicator,
    list = Received Signal Strength Indicator,
    tag = abbrev
}
\DeclareAcronym{RX}{
	short = RX,
	long = receiver,
	list = Receiver,
	tag = abbrev
}
\DeclareAcronym{SCA}{
    short = SCA,
    long = successive convex approximation,
    list = Successive Convex Approximation,
    tag = abbrev
}
\DeclareAcronym{SG}{
    short = SG,
    long = stochastic gradient,
    list = Stochastic Gradient,
    tag = abbrev
}
\DeclareAcronym{SNR}{
    short = SNR,
    long = signal-to-noise ratio,
    list = Signal-to-Noise Ratio,
    tag = abbrev
}
\DeclareAcronym{SINR}{
    short = SINR,
    long = signal-to-interference-plus-noise ratio,
    list = Signal-to-Interference-plus-Noise Ratio,
    tag = abbrev
}
\DeclareAcronym{SOCP}{
	short = SOCP, 
	long = second order cone program,
	list = Second Order Cone Program,
	tag = abbrev
}
\DeclareAcronym{SSE}{
    short = SSE,
    long = stream specific estimation,
    list = Stream Specific Estimation,
    tag = abbrev
}
\DeclareAcronym{SVD}{
	short = SVD,
	long = singular value decomposition,
	list = Singular Value Decomposition,
	tag = abbrev
}
\DeclareAcronym{TDD}{
	short = TDD,
	long = time division duplex,
	list = Time Division Duplex,
	tag = abbrev
}
\DeclareAcronym{TX}{
	short = TX,
	long = transmitter,
	list = Transmitter,
	tag = abbrev
}
\DeclareAcronym{UE}{
    short = UE,
    long = user equipment,
    list = User Equipment,
    tag = abbrev
}
\DeclareAcronym{UL}{
    short = UL,
    long = uplink,
    list = Uplink,
    tag = abbrev
}
\DeclareAcronym{ULA}{
	short = ULA,
	long = uniform linear array,
	list = Uniform Linear Array,
	tag = abbrev
}
\DeclareAcronym{UPA}{
    short = UPA,
    long = uniform planar array,
    list = Uniform Planar Array,
    tag = abbrev
}
\DeclareAcronym{WMMSE}{
    short = WMMSE,
    long = weighted minimum mean-squared error,
    list = Weighted Minimum Mean-Squared Error,
    tag = abbrev
}
\DeclareAcronym{WMSEMin}{
    short = WMSEMin,
    long = weighted sum \ac{MSE} minimization,
    list = Weighted sum \ac{MSE} Minimization,
    tag = abbrev
}
\DeclareAcronym{WBAN}{
	short = WBAN,
	long = wireless body area network,
	list = Wireless Body Area Network,
	tag = abbrev
}
\DeclareAcronym{WSRMax}{
    short = WSRMax,
    long = weighted sum rate maximization,
    list = Weighted Sum Rate Maximization,
    tag = abbrev
}
\DeclareAcronym{XR}{
    short = XR,
    long = extended reality,
    list = Extended Reality,
    tag = abbrev
}
\newtheorem{exmp}{Example}%[section]
\theoremstyle{definition}
\newcommand{\CA}[0]{{\mathcal{A}}}
\newcommand{\CB}[0]{{\mathcal{B}}}
\newcommand{\CD}[0]{{\mathcal{D}}}
\newcommand{\CI}[0]{{\mathcal{I}}}
\newcommand{\CK}[0]{{\mathcal{K}}}
\newcommand{\CQ}[0]{{\mathcal{Q}}}
\newcommand{\CS}[0]{{\mathcal{S}}}
\newcommand{\CT}[0]{{\mathcal{T}}}
\newcommand{\CU}[0]{{\mathcal{U}}}
\newcommand{\CV}[0]{{\mathcal{V}}}
\newcommand{\Bh}[0]{{\mathbf{h}}}
\newcommand{\Bv}[0]{{\mathbf{v}}}
\newcommand{\Bx}[0]{{\mathbf{x}}}
\newcommand{\subparagraph}{}
\titlespacing\section{3pt}{6pt plus 4pt minus 2pt}{6pt plus 2pt minus 2pt}
\titlespacing\subsection{3pt}{4pt plus 4pt minus 2pt}{4pt plus 2pt minus 2pt}
\titlespacing\subsubsection{3pt}{3pt plus 4pt minus 2pt}{0pt plus 2pt minus 3pt}
\title{ Multi-antenna Coded Caching for  Location-Dependent Content Delivery}
\begin{document}

\author{\IEEEauthorblockN{Hamidreza Bakhshzad Mahmoodi, MohammadJavad Salehi, and Antti T\"olli} \\
\IEEEauthorblockA{
    Centre for Wireless Communications, University of Oulu, 90570 Oulu, Finland \\
    \textrm{E-mail: \{firstname.lastname\}@oulu.fi}
    }
\thanks{Part of this work has been presented at the IEEE International Symposium on Information Theory (ISIT) 2021, in Melbourne, Australia, {and IEEE GLOBECOM 2022, Rio de Janeiro-RJ, Brazil}. This work is supported by the Academy of Finland under grants no. 318927 (6Genesis Flagship), 319059 (Coded Collaborative Caching for Wireless Energy Efficiency - CCCWEE), and 343586 (Cache-aided mmWave Access for Immersive Digital Environments - CAMAIDE), and by the Finnish Research Impact Foundation under the project Directional Data Delivery for Wireless Immersive Digital Environments (3D-WIDE).}
}

\maketitle

\begin{abstract}
Human-computer interaction continuously evolves towards a genuinely immersive experience, submerging users in a three-dimensional (3D) virtual world. A realistic, immersive experience necessitates a highly reliable and agile wireless connection to support immense data transmission. Yet, there are abundant but underutilized memory resources available at the devices which can be harnessed as supplementary assets to reduce the excessive burden on the wireless medium. What is more, the use of Coded Caching (CC) techniques enables the cumulative cache memory of users in the network to be used as an additional communication resource. To this end, a location-dependent multi-antenna CC-based content delivery scheme tailored specifically for wireless extended reality applications is proposed in this paper. First, a novel memory allocation process is developed, enabling an appropriate trade-off between local and global caching gains. In this regard, the local caching gain is maximized when the memory is mostly dedicated to locations with poor connectivity conditions (absolute fairness). In contrast, the global caching gain is maximized when the memory is uniformly allocated among all the locations. As a result of the memory allocation process, unequal fractions of location-dependent multimedia content are cached by each user. {Given the asymmetric cache placement, a novel algorithm is proposed to create suitable codewords for each user during the subsequent delivery phase, which simultaneously achieves a global and local caching gain.} The proposed delivery scheme also combines global caching and spatial multiplexing gains using a weighted max-min multicast beamformer design with multi-rate modulation. Numerical experiments and mathematical analysis demonstrate significant performance gains, in terms of the 95-percentile expected delivery time, compared to unicast and multicast scenarios where either the local or global caching gain is maximized.

\end{abstract}

\begin{IEEEkeywords}
Multi-antenna communications, coded caching, location-dependent caching, immersive viewing, extended reality, weighted-max-min beamforming. 
\end{IEEEkeywords}

\section{Introduction}
It is expected that 5G penetration will surpass the ten percent mark by 2023 while the average per-user throughput will encounter more than a ten-fold increase compared with what was achievable five years earlier with 4G-LTE~\cite{cisco2020}. This is primarily due to new data-intensive services such as wireless extended reality (XR) applications offered by 5G and beyond{~\cite{Nokia-immersive,6G_white_paper_2020,bastug2017toward, taleb2022_towards_XR_vision, walid_sad_bennis_VR_XR_2022, flashback_2016_VR_static_dynamic_support,wireless_virtual_reality_TCOM2018}.} {Wireless XR applications necessitate stringent quality of service (QoS) in terms of both low latency ($ < 10$ ms) and high throughput ($6.37 - 95.55$ Gbps)~\cite{Nokia-immersive,6G_white_paper_2020,bastug2017toward, taleb2022_towards_XR_vision, walid_sad_bennis_VR_XR_2022, flashback_2016_VR_static_dynamic_support,wireless_virtual_reality_TCOM2018}}. Indeed, supporting the high-data-rate wireless connectivity {with low latency necessitated for} such data-intensive applications {calls for} more advanced solutions than merely increasing the available bandwidth~\cite{bastug2017toward}. Meanwhile, improving caching and computing capabilities at end-users has been deemed highly effective in increasing the transmission efficiency{~\cite{flashback_2016_VR_static_dynamic_support,wireless_virtual_reality_TCOM2018,wireless_caching_analysis_TWC2015}}. As such, upcoming mobile broadband applications rely heavily on asynchronous content reuse~\cite{proactive_caching_Cm2016}, and hence, proactive caching of popular content at the end-users could relieve network congestion and bandwidth consumption during peak traffic demand times~\cite{role_of_caching_in_future_wireless_caire_JSAC2018}. In this regard, various works have considered proactive caching in a \ac{SISO} setting to demonstrate its potential~\cite{sun2019communications,yang2018communication,sun2020bandwidth,dang2019joint}. Specifically, by utilizing caching and computing capabilities of XR mobile gadgets, the traffic burden over the wireless network can be effectively alleviated. Moreover, significant bandwidth and delay-reduction gains have also been demonstrated in~\cite{sun2019communications,yang2018communication,sun2020bandwidth,dang2019joint}.

The \ac{CC} technique, initially proposed by Maddah-Ali and Niesen in~\cite{MaddahAli-2014}, has recently gained attention due to an additional \emph{global caching gain} compared to traditional (local) caching schemes. This gain is achieved by intelligent utilization of the aggregate cache memory available throughout the network. Remarkably, the global caching gain scales linearly with the total number of users in the network, making it appealing for multi-user collaborative use cases such as XR applications~\cite{salehi2022enhancing}. In this regard, a recent work in~\cite{CC_edge_computing_for_VR_twc2021} has reduced the transmission bandwidth for delay-constrained XR applications by leveraging coded cache placement and mobile devices' computing capabilities in a \ac{SISO} setup. However, an exciting property of CC schemes is their capability to combine global caching and spatial multiplexing gains resulting from multi-antenna transmissions~\cite{pooya-cc-physical-2019-journal}. This makes CC even more appealing as multi-antenna connectivity plays a critical role in future communication systems~\cite{6G_white_paper_2020}. Nevertheless, there is a gap in the literature when it comes to applying multi-antenna CC techniques to XR setups, especially taking advantage of their location-dependent content access characteristics.

In this paper, we introduce a new multi-antenna \ac{CC} delivery scheme with location-dependent content requests well-tailored for future collaborative XR applications. {In the proposed setting, a single transmitter equipped with multiple antennas has access to a library and serves a group of cache-enabled users.} We consider a wireless connectivity scenario where the users are free to move, and their requested contents depend on their instantaneous locations in the application environment. Such a scenario entails a substantial volume of multimedia traffic with guaranteed QoS throughout the operating environment. In this regard, a location-dependent, uneven memory allocation is carried out based on the approximated or predicted data rate at each given location. Specifically, the portion of memory dedicated to each location is affected by the quality of wireless connectivity at that location. This is in contrast to conventional CC schemes, where the same portion of the memory is dedicated to each file in the library, necessitating new delivery schemes to be devised. Thus, a novel packet generation scheme is introduced to handle the irregularity by creating packets with sizes proportional to the corresponding uneven cache ratios. Finally, a multicast beamforming scheme with an underlying multi-rate modulation is proposed to leverage global caching and multiplexing gains simultaneously and hence, to improve the QoS compared to the state-of-the-art.
%%%%%%%%%%%%%%%%%%%%%%%%%%%%%%%%%%%%%%%%%%%%%%%%%%%%
\subsection{Prior Art}
\noindent\textbf{Single- and multi-antenna coded caching.} 
Encouraged by the appealing \ac{CC} gains, the original error-free single-server system model in~\cite{MaddahAli-2014} was later extended to various other practical scenarios such as multi-server and wireless multi-antenna coded caching~\cite{Shariatpanahi2016,pooya-cc-physical-2019-journal,tolli2017multi}. Interestingly, the global caching gain was shown to be additive with the spatial multiplexing gain when CC is applied to a multi-antenna setup~\cite{pooya-cc-physical-2019-journal}. Moreover, the optimized multi-antenna precoder design was shown to be crucial for the \ac{CC}, especially in the low \ac{SNR} regime, to account for the inter-stream interference~\cite{tolli2017multi} appropriately. Device-to-device (D2D) extensions of multi-antenna coded caching can also be found in~\cite{DPDA2019,Ji2016,ISWCSmyself,Mahmoodi-etal-Arxiv19}. Specifically, while~\cite{DPDA2019} and~\cite{Ji2016} considered an infrastructure-less network where the only available link is D2D, works~\cite{ISWCSmyself} and~\cite{Mahmoodi-etal-Arxiv19} extended this system model to a general framework where the downlink transmission is assisted with D2D links.
Meanwhile, various practical limitations of \ac{CC} were also addressed by the research community. Most notably, it is well-known that to achieve the original caching gain proposed in~\cite{MaddahAli-2014}, the underlying scheme requires splitting finite-length files into an exponentially growing number of subpackets (with respect to the network size)~\cite{lampiris2018adding}. This exponential growth is even more severe in multi-antenna setups~\cite{Shariatpanahi2016,pooya-cc-physical-2019-journal,tolli2017multi,DPDA2019,Ji2016,ISWCSmyself,Mahmoodi-etal-Arxiv19}, motivating the research on reduced-subpacketization \ac{CC} schemes with no or moderate performance loss~\cite{lampiris2018adding,salehi2020lowcomplexity,Caire_MLPDA_2023}. In a similar work, the effect of the subpacketization on the low-SNR rate was also investigated in~\cite{salehi2019subpacketization}. {Unlike~\cite{Shariatpanahi2016,pooya-cc-physical-2019-journal,tolli2017multi,DPDA2019,Ji2016,ISWCSmyself,Mahmoodi-etal-Arxiv19,lampiris2018adding,salehi2020lowcomplexity,Caire_MLPDA_2023,salehi2019subpacketization}, which consider perfect channel 
 state information at the transmitter (CSIT), authors in~\cite{Ngo2017_MISO_CC_WOCSIT} devise a scheme for imperfect CSIT that scales with the number of users. Finally, as in the CC network, a common message is being transmitted to several users, users’ privacy requirements to prevent information leakage are also addressed in the literature (e.g.,~\cite{letafati2021learning} and~\cite{sojdeh2022secure}).}

\noindent\textbf{Coded caching with multi-rate transmission.}
%A less-studied problem of \ac{CC} schemes is the \textit{near-far} issue. To elaborate, since \ac{CC} schemes are generally based on multicasting, the achievable rate in any multicast message is limited to the rate of the user with the worst channel condition. 
A less-studied problem of \ac{CC} schemes, affecting content delivery applications in general and XR applications in particular, is the \textit{near-far} issue. Specifically, due to the underlying multicasting nature of conventional \ac{CC} schemes~\cite{Shariatpanahi2016,pooya-cc-physical-2019-journal,tolli2017multi}, the achievable rate in any multicast message is limited to the rate of the user with the worst channel conditions. In fact, as studied in~\cite{zhao_petros_MU_MISO_near_far_issue_ITW2021}, the effective gains of conventional \ac{SISO}-\ac{CC} schemes could entirely vanish at the low-\ac{SNR} region due to the near-far issue. 
To address this shortcoming, a congestion control technique was proposed in~\cite{destounis2020adaptive} to avoid serving users in adverse channel conditions, and multiple descriptor codes~(MDC) were utilized in~\cite{salehi2020coded} to serve ill-conditioned users with a lower quality of experience (QoE). Similarly, a stochastic \ac{CC} model considering queue minimization and packet control was introduced in~\cite{Coded_caching_for_stochastic_wireless_network_TWC2021}, and joint power minimization and scheduling over a wireless \ac{CC} network for delay-constrained applications was also proposed in~\cite{liu_joint_power_energi_cc_TWC2021}. Using a different perspective, it was discussed in~\cite{ozfatura_mobility_awaire_RCOM2020} that as long as user mobility patterns were known at the server, different cache profiles could be assigned to multiple cache-enabled helper nodes scattered throughout the environment to improve the \ac{CC} transmission rate. Moreover, guiding users towards locations with preferable conditions in an immersive XR application using learning-based techniques was considered in~\cite{reinforcement_for_immersive_elsevier2017}, and an order-optimal location-based coded cache placement was proposed in~\cite{caire_optimal_location_dependent_arxiv2021} to assign different cache profiles to cache-enabled transmitters located in distinct locations.

Unlike~\cite{salehi2020coded,Coded_caching_for_stochastic_wireless_network_TWC2021,liu_joint_power_energi_cc_TWC2021, ozfatura_mobility_awaire_RCOM2020, reinforcement_for_immersive_elsevier2017,caire_optimal_location_dependent_arxiv2021}, which were based on standard XOR-ing of data elements, nested code modulation (NCM) was utilized in~\cite{tang2017coded} to allow building codewords that serve every user in the multicasting group with a different rate. Several other multi-rate modulation schemes can also be found in~\cite{kramer_broadcast_channel_side_information_ITW2007, chen2010novel, tang2011full, asadi_side_information_broadcast_multi_rate_TIT2015}. The multi-rate property in these schemes was achieved by altering the modulation constellation using side information available to each user. Later on, authors in~\cite{zhao_wireless_CC_ring_area_ISIT2021} and~\cite{zhao_petros_near_far_Nakagami_asilomar2021} benefited from the shared-cache idea of~\cite{lampiris2018adding} along with the NCM scheme to compensate for the near-far problem caused by the users with adverse channel conditions. 
%average out the weak user effect. In this regard, by grouping users in groups with equal size and serve one user per group with its own achievable rate, the users with good wireless connectivity would be served faster and they can be replaced with another user from the same group~\cite{zhao_wireless_CC_ring_area_ISIT2021}. 
However, all these works are either limited to single-antenna transceivers or fixed-connectivity network conditions where the users' rates are fixed and known to the transmitter. Thus, the near-far issue still needs to be addressed in both multi-antenna setups and real-time applications where users frequently move within the network, and their achievable rate changes accordingly.

%Moreover, considering the multi-layer transmission support defined in 5G NR~\cite{5G_NR} and the recent advancement in fast converging beamforming schemes (e.g., \cite{Nariman_Globcom_2021_arxiv} and~\cite{salehi2021lwsa}), it is expected to move toward multi-antenna support for future XR devices. Thus, a general framework considering multi-antenna transceivers is of great interest.

%%%%%%%%%%%%%%%%%%%%%%%%%%%%%%%%%%%%%%%%%%%%%%%%%%%%
\subsection{Our contribution}
\label{section:our_contribution}
{This paper proposes a novel CC-based multi-antenna content delivery scheme for location-dependent data requests, particularly focusing on collaborative XR applications that require high data-rate connectivity and are bound to strict delay constraints. In realistic multiuser environments, available radio resources should be shared among all the users of the given XR application, limiting the available link qualities due to a higher network load~\cite{salehi2022enhancing}. In such a scenario, efficient utilization of in-device memories available to the users will be highly beneficial. In a collaborative XR setup, all users are served simultaneously in a bounded environment, where the actions and choices of each user affect the final results perceived by all users. Specifically, we follow the XR connectivity framework in~\cite{salehi2022enhancing}, where the XR content is decomposed into the so-called \textit{static}, and \textit{dynamic} parts (see Figure~\ref{fig:static_dynamic_decomposition}) and multi-antenna \ac{CC} techniques are used to deliver the {cacheable} part efficiently. In addition, cache-enabled users are scattered in the application environment and move freely. As users change their position in the environment, their achievable rate varies based on their location.} We assume {that}
the XR application environment is split into several single transmission units (STU) such that a separate 3D omnidirectional image is needed to reconstruct the virtual environment in each STU. {Each user requests the server to receive the missing data to reconstruct its field of view (FoV). After collecting all users' requests, the server transmits the missing data for dynamic and static file parts to all users (over the air) while also instructing them on reconstructing their FoVs using cache content and the delivered data.} %Clearly, such data depends on the user's location at the request time, resulting in a location-dependent content request scenario. 
%After collecting the requests, the server delivers the requested data to the users through a fixed-location multi-antenna transmitter.
%In this paper, we only consider the delivery of the static part that amounts to the majority of the transmitted data (we assume {that} the remaining dynamic part is delivered separately on orthogonal network resources). %, e.g., using TDMA or FDMA).
Depending on the distance from the transmitter and possible infrastructure elements obstructing the wireless link in the XR environment, communication quality could vary for different STUs. Therefore, this paper aims to design caching and delivery schemes to minimize the transmission time and avoid excessive delays in serving all the XR users in a non-uniform wireless connectivity scenario with location-dependent content requests.

\begin{figure}[ht]
     \centering
      \includegraphics[height=3.7cm]{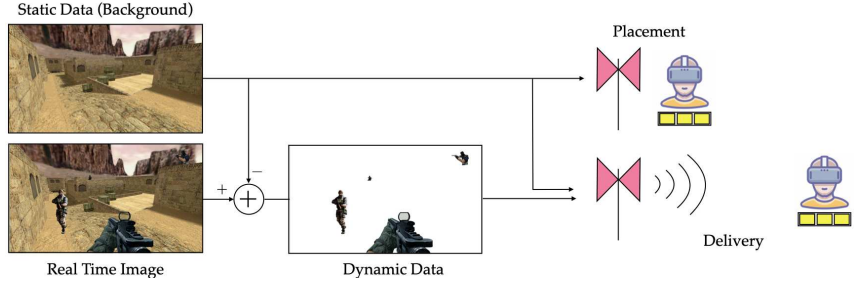}
        \caption{XR Data decomposition into static and dynamic parts.}
        \label{fig:static_dynamic_decomposition}
\end{figure}

{Intuitively, larger cache portions are allocated to the contents requested in the wireless connectivity bottlenecks to avoid excessive delivery duration and minimize the average transmission time.} In this regard, we first design a new memory allocation process that uses the available predictions of the achievable rate within the application environment to prioritize the caching of the content requested in STUs with reduced connectivity. The proposed allocation process enables a trade-off between local and global caching gains, such that the local caching gain is maximized when the memory is mostly dedicated to locations with poor connectivity conditions (absolute fairness), and the global caching gain is maximized when the memory is uniformly allocated among all the locations.
Then, for the resulting non-uniform cache allocation setup, a novel content delivery algorithm is introduced for achieving a global caching gain additive to the spatial multiplexing gain. The proposed delivery scheme relies on underlying multi-rate transmission techniques to simultaneously serve users with diverse channel conditions, i.e., to transmit smaller amounts of data to users in poor-connectivity STUs while simultaneously delivering larger amounts of data to other users. Indeed, the non-uniformity in the cache allocation causes a \ac{DoF} loss compared with the existing multi-antenna coded caching schemes that use symmetric content placement. Nevertheless, the proposed scheme is better tailored to the considered XR application scenario as it {avoids} excessive delivery {time for serving} users in areas with poor communication quality.

{The current paper is an extension of our earlier conference publications ~\cite{Mahmoodi_immersive_isit2021} and~\cite{mahmoodi2022asymmetric}. In~\cite{Mahmoodi_immersive_isit2021}, a novel location-dependent CC scheme is proposed for a single antenna transmitter, and~\cite{mahmoodi2022asymmetric} is the multi-antenna extension of~\cite{Mahmoodi_immersive_isit2021}, where we assume the memory allocation process is done such that the global caching gain at each location is an integer.
In this paper, 1) the beamforming design is described in more detail, 2) the requirement for integer global caching gains is relaxed, and 3) a two-phase delivery scheme comprising both multicast and unicast transmissions is introduced to improve the overall performance. In~\cite{mahmoodi_ICC2022_nonsym}, a similar location-dependent CC scheme was proposed based on a signal-level cache-aided interference cancellation scheme from~\cite{salehi2020lowcomplexity}. The scheme in~\cite{mahmoodi_ICC2022_nonsym} benefits from a lower subpacketization and simpler beamformer design than the scheme proposed herein. Yet, the scheme in~\cite{mahmoodi_ICC2022_nonsym} is limited to scenarios where the spatial multiplexing gain exceeds the coded caching gain (thus limiting the actual benefit of applying coded caching techniques). In addition, the finite-SNR performance of~\cite{mahmoodi_ICC2022_nonsym} is strictly inferior to the scheme proposed herein as it lacks the multicasting gain of XORing approach~\cite{MaddahAli-2014} (c.f.,~\cite{salehi2020lowcomplexity}).}  
\subsection{Notation and structure}
Matrices and vectors are presented by boldface upper and lower case letters, respectively, and calligraphic letters are used to denote sets.
%The Hermitian of the matrix $\bf{A}$ is denoted as $\bf{A}\herm$. Conventional matrix inversion is written as $\bf{A}^{-1}$. 
For the set $\CA$ and vector $\Bv$, $|\CA|$ and $\|\Bv\|$ represent the cardinality of $\CA$ and norm of $\Bv$, respectively. Also, for two sets $\CA$ and $\CB$, $\CA \backslash \CB$ includes the elements of $\CA$ that are not in $\CB$.
%Cardinality of the set $\mathcal{A}$ is given by $|\mathcal{A}|$. Let $\mathbb{C}$ and $\mathbb{N}$ denote the set of complex and natural numbers and $\|.\|$ be the norm of a complex vector. 
{Moreover}, $[m]$ denotes the set of integer numbers $\{1, . . . ,m\}$, and $\oplus$ represents addition in the corresponding finite field. {Finally, Table~\ref{table:Notations} represents some of the main notations used throughout the paper.}
%We denote $\binom{a}{b}=\frac{a!}{(a-b)!b!}$ as the number of combinations of $b$ objects from a set with $a$ objects. In this paper we assume $\binom{a}{b}=0$, for $b>a$.

%%%%%%%%%%%%%%%%%%%%%%%%%%%%%%%%%%%%%%%%%%%%%%%%%%%%
The rest of this paper is organized as follows.
In {Section}~\ref{sec:sysmodel}, we describe our location-based system model. A two-phase cache placement scheme comprised of memory allocation and cache arrangement processes is described in {Section}~\ref{sec:cache_placement}, while Section~\ref{sec:delivery} discusses the delivery procedure. In {Section}~\ref{Sec:beamforming}, weighted-max-min beamforming, tailored for the considered location-based cache placement setup, is introduced. In the end, numerical results are provided in {Section}~\ref{sec:Simulations}, while {Section}~\ref{sec:conclusions} concludes the paper.

\begin{table}[ht]
\centering
\smaller
\caption{{Main Notations}}
\begin{tabular}{|l|l|l|l|l|l|}
\hline
{$K$}&  {User count}&  {$\hat{t}$}& {Common global caching gain} &  {$\alpha$}&  {Spatial multiplexing gain}\\ \hline
{$L$} & {Antenna count} & {$Z_k(.)$} & {User $k$ caching function} & {$W(s)$} & {The FoV of state $s$} \\ \hline
 {$y_k$}& {User $k$ received signal} & {$x_{\CU}$} & {Transmitted message to user set $\CU$} & {$\Bv_{\CU}$} & {Dedicated precoder to $x_{\CU}$} \\ \hline
{$T_T$} & {Total delivery time} & {$\Bx_{\bar{\CK}}$} & {Transmitted signal to user set $\bar{\CK}$} & {$\hat{T}_T$} & {Approximated delivery time} \\ \hline
{$\Bh_k$}  & {User $k$ channel vector} & {$r(s)$} & {Approximated throughput at state $s$} & {$\CU_{-k}$} & {All the users in set $\CU \setminus k$} \\ \hline
{$s_k$} & {User $k$ state} & {$R_u,R_w$} & {Approximated state independent rate} & {$m(s)$} & {Dedicated memory to $W(s)$} \\ \hline
{$R_k$}& {User $k$ delivery rate} & {$R^s_u,R^s_w$} & {Approximated symmetric rate} & {$m_k$} & {User $k$ viable memory} \\ \hline
{$W_{d_k}$} & {User $k$ requested file} & {$\varphi_k$} & {User $k$ file-division factor} & {$R^{\CQ}_{\text{sum}}$} & {Sum rate over $|\CQ|$ messages} \\ \hline
{$W^{q}_{\CV(s_k),k}$} & {File segment of $W_{d_k}$} & {$\chi_k$} & {User $k$ file-concatenation factor} & {$D$} & {Desired message count} \\ \hline
 {$W_{\CV(s)}(s)$} & {Sub-file of file $W(s)$}  & {$G_{\CU,k}$} & {Transmitted data to user $k \in \CU$} & {$I$} & {Interfering message count} \\ \hline
{$M$} & {Total available memory} & {$F$} & {File-size} & {$P_T$} & {Total transmit power} \\ \hline
\end{tabular}\label{table:Notations}
\end{table}
%%%%%%%%%%%%%%%%%%%%%%%%%%%%%%%%%%%%%%%%%%%%%%%%%%%%

\section{System Model}
\label{sec:sysmodel}
We envision a bounded environment (gaming hall, operating theatre, etc.) where a base station (BS) with $L$ transmit antennas serves $K$ single-antenna users\footnote{The system model can be easily extended to multi-antenna receivers following a similar approach as proposed in~\cite{salehi2021lwsa}.} through wireless communication links. The set of users is denoted by $\CK = [K]$. The users are equipped with finite-size cache memories and are free to move throughout the environment. Every user requests data from the BS at each time slot based on its location and the application's needs. {The requested data content can be divided into static and dynamic parts (see Figure~\ref{fig:static_dynamic_decomposition}), and a user needs to obtain both parts to reconstruct the detailed FoV. Typically the major share of the FoV is comprised of the static part, which is the main target in the delivery phase. The dynamic part is delivered parallel to the static part by allocating a portion of the available radio resources (frequency, time, space), depending on the current content demands for both static and dynamic parts. However, in typical virtual gaming scenarios, also the dynamic parts (e.g., geometrical shapes, textures, and avatars) of the FoV are almost entirely cacheable and can be partially stored in advance at the end users~\cite{flashback_2016_VR_static_dynamic_support}. Due to the interaction of the objects in the virtual world, low-overhead control data describing how to reconstruct the dynamic content from both the cached elements and the multicast data must also be provided to the users.} 
This paper {only} focuses on the wireless delivery of the location-dependent {cacheable} content, partially aided by in-device cache memories. A real-world application of such a setup is a wireless XR environment, where the requested data is needed to reconstruct the location-dependent 3D FoV at each user. As a particular example, a 3D XR gaming environment can be considered, where obstacles, walls, buildings, surrounding nature, etc., constitute the static part. On the other hand, players themselves and how they interact with the environment can be considered dynamic content (see Figure~\ref{fig:static_dynamic_decomposition}). Naturally, users located in different locations experience distinct channel conditions due to varying wireless connectivity. Thus, the goal is to design a cache-aided communication scheme that maximizes the achievable rate over the wireless link while also avoiding extensive transmission delays at wireless bottleneck areas.

As discussed in Section~\ref{section:our_contribution}, we assume {that} the application environment is mapped into several STUs, and a separate file is required to construct the scenery at each STU. {Specifically, we assume the requested file contains enough data to render the whole $360$ degree spherical FoV around the user. Any dynamic change prompted by the users' head rotation or other changes in the environment is assumed to be locally rendered by the users, in accordance with the locally available sensory data or after receiving the necessary instruction set from the server for reconstructing and overlaying the dynamic content}. Let us assume that STU mapping is done such that all points in a given STU have almost the same expected level/quality of wireless connectivity. For simplicity, we use the term \textit{state} interchangeably with STU. A graphical representation of a simple application environment with eight states is provided in {Figure}~\ref{fig:system-model}. We use $\CS$ to represent the set of states and assume {that} $|\CS| = S$. Also, the file requested by a user in state $s \in \CS$ is denoted by $W(s)$. Without loss of generality, we assume for every region $s\in\CS$, the size of $W(s)$ is $F$ bits. If not stated otherwise, we consider a normalized data unit in the following and drop $F$ in subsequent notations.

Similar to other centralized coded caching schemes, our new location-dependent scheme works in two distinct phases, I) cache placement and II) content delivery. Each user $k$ is equipped with a cache memory of $M$ (normalized) {data unit} and has a message $Z_{k} = Z_{k}(W(s_{1}), \dots,W(s_{S}))$ stored in its cache during the placement phase, where $Z_{k}(\cdot)$ denotes a function of the files $W(s)$, $\forall s \in \CS$, with entropy not larger than $M$ {data unit}. 

Upon a set of requests $d_k \in \CS$, $\forall k \in \CK$ at the content delivery
phase, the BS multicasts several coded messages, such that at the end of transmission, all users can reliably recover their requested files. Let us assume that coded messages are transmitted in different time intervals and use $\Bx_{\bar{\CK}}$ to denote a coded message that sends data to all the users in $\bar{\CK} \subseteq \CK$.
%and use $\Bx(i)$ to represent the message in time interval $i$. 
The number of coded messages and their generation process is detailed in Section~\ref{sec:delivery}. However, as a general description, every message $\Bx_{\bar{\CK}}$ comprises several codewords $x_{\CU}$, where each codeword $x_{\CU}$ contains useful data for a subset of users $\CU \subseteq \bar{\CK}$. Thus, $\Bx_{\bar{\CK}}$ is built as
%\begin{equation}
$    \Bx_{\bar{\CK}} = \sum_{\CU \subseteq \bar{\CK}} \Bv_{\CU}x_{\CU} $,
%\end{equation}
where $\Bv_{\CU} \in \mathbb{C}^{L}$ denotes the precoding vector dedicated to users in set $\CU$. After {sending} $\Bx_{\bar{\CK}}$, every user $k \in \bar{\CK}$ receives 
\begin{equation}\label{eq:recieved_signal_sysmodel}
    y_{k,\bar{\CK}} = \Bh_{k,\bar{\CK}}\herm\sum_{\CU \subseteq \bar{\CK}}\Bv_{\CU}x_{\CU} + z_k,
\end{equation}
where the channel vector between the BS and user $k$ is denoted by $\Bh_{k,\bar{\CK}} \in \mathbb{C}^{L}$, and $z_k \sim \mathbb{CN}(0,N_0)$ represents the additive white Gaussian noise.
Note that to reproduce the requested file $W_{d_k}$, the decoder of user $k$ makes use of the local cache content $Z_k$ as well as the received signals from the wireless channel over different time intervals (i.e., $y_{k,\bar{\CK}}$). Throughout the rest of the text, we present the delivery procedure for a specific transmission and assume {that} the same procedure is repeated at each transmission. Hence, we use $y_{k}$ and $\Bh_k$ interchangeably with $y_{k,\bar{\CK}}$ and $\Bh_{k,\bar{\CK}}$, respectively.
%During each delivery interval $i$, a subset of users $\bar{\CK} \subseteq \CK$ is served. The received signal at user terminal $k \in \bar{\CK}$ at time instant $i$ can be written as
%\begin{equation}\label{eq:recieved_signal_sysmodel}
%    y_k(i) = \Bh_k\herm(i)\sum_{\CU \subseteq \bar{\CK}}\Bv_{\CU}(i)x_{\CU}(i) + z_k(i),
%\end{equation}
%where the channel vector between the BS and UE $k$ is denoted by $\Bh_k \in \mathbb{C}^{L}$ and $z_k \sim \mathbb{CN}(0,N_0)$ represents the additive white Gaussian noise. Moreover, the vector $\Bv_{\CU}$ is the multicast beamformer dedicated to users in subset $\CU \subseteq \bar{\CK}$ of users, and $x_{\CU}(i)$ is the
%corresponding multicast message. The size of $\CU$ depends on the users location at time instant $i$ and will be discussed in more details in subsequent sections.
We assume {that instantaneous and error-free channel state information is available at the transmitter (e.g., via reciprocal reverse link pilot measurements) and used for beamformer design and rate allocation during the delivery phase.} 

{For the exact location-dependent cache placement, we would need to know the normalized achievable throughput $r(s)$ [files/second] at each state $s$. However, this is not possible; since, to compute $r(s)$, we would need prior information about the delivery scheme. This includes, for instance, the number of users scheduled in parallel, all users' locations and channel states, and the precoding algorithms used for data transmission. However, such data is not available during the placement phase. Therefore, expected or approximated delivery rates at each state must be considered for placement purposes. The expected location-specific data rates $r(s)$ can be attained through various means, e.g., via collecting statistics from past active users. In this paper, since the number of users served in parallel in each transmission interval and their respective channel states are not yet known during the placement phase, a hypothetical single-user scenario is considered for the placement to approximate the expected achievable rates $r(s)$.} As a result, the expected interference-free throughput attained in state $s\in \CS$, normalized by the file size $F$ [bits], can be roughly approximated as
\begin{equation} \label{eq: state-rate}
    {r}(s) = \frac{\Omega}{F}C_p\mathbb{E}\big[\log(1+\frac{P_T \ \|\Bh_{k_s}\|^2}{N_0})\big] \quad \text{[files/second]},
\end{equation}
where $C_p$ is a pre-log scaling factor containing any practical overhead, $P_T$ is the transmission power, $\Omega$ is the communication bandwidth, and $\Bh_{k_s} \in \mathbb{C}^{L}$ is the channel vector between the server and a user $k$ located in state $s$.  
Note that the expectation is taken over all user locations and channel realizations in state $s$.  It is worth noting that~\eqref{eq: state-rate} is an `upper bound' for the achievable rate at any state. The main objective is to have a relative throughput measure among states. { The negative throughput scaling due to serving multiple users in parallel (including practical overheads, the impact of scheduling, etc.)} would not change the memory allocation process since it can be considered almost the same across all the states. 

\begin{figure}
    \centering 
    \includegraphics[width=0.5\columnwidth,keepaspectratio]{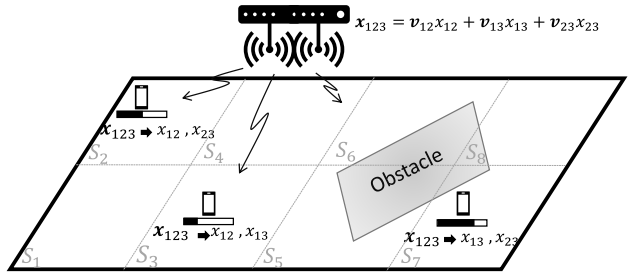}
    \caption{An application environment with $K=3$ users, split into $S=8$ states, where users $k=\{1,2,3\}$ are located in states $s_k=\{3,2,7\}$, respectively. State-specific approximated rates used for cache placement are represented by $r(s)$, where $r(3) > r(2) > r(7)$. The transmitted message ${\bf{x}}_{123}$ consists of  codewords $x_{\mathcal{U}}$, $\CU \in \{ \{1,2\}, \{1,3\}, \{2,3\} \}$, where $x_{\CU}$ contains useful data for users $k \in {\mathcal{U}}$. Beamforming vectors are denoted by $\Bv_{\CU}$. The black bar below each user indicates how much of the requested data is already cached at the user.}
    \label{fig:system-model}
\end{figure}

\section{Location-dependent Cache Placement}
\label{sec:cache_placement}
Different from the existing works, our cache placement phase comprises two consecutive processes, \emph{memory allocation} and \emph{cache arrangement}. The placement phase is executed, for example, before the users enter the application environment or when they pass through specific high data-rate locations (data shower), e.g., nearby the transmitter. During this phase, users' cache memories are proactively filled with valuable data aiming to minimize the required transmission time during the upcoming delivery phase. Note that different from the existing works where the delivery time is optimized only during the content delivery phase, we proactively consider minimizing the delivery time also in the placement phase. As a result, the contents relevant to locations with poor wireless connectivity are prioritized in the user's memories to help prevent excessive delays during upcoming transmissions.

\noindent\textbf{Memory Allocation:}
Due to the considered real-time application, it is crucial to guarantee to deliver the requested data within a limited time. Intuitively, this requires reserving a larger share of the total cache memory for storing data needed in locations with poor communication quality. In this regard, the amount of cache memory dedicated for storing (parts of) every state-specific content file $W(s)$ at each user is determined during the \textit{memory allocation} process. In this paper, we assume that there is no {a priori} knowledge about the users' spatial locations during the {placement} phase. Hence, for memory allocation, we consider uniform access probability for all the states (using prior knowledge about the states' popularity, the performance can be further improved). Let us use $m(s)$ to denote the normalized cache size at each user allocated to store (parts of) $W(s)$. Since the size of $W(s)$ is normalized to one, a user in state $s$ needs to receive $1-m(s)$ data units over the wireless link to reconstruct the FoV of state $s$. {Intuitively, the delivery time for a user in state $s$ can be approximated by $\frac{1-m(s)}{r(s)}$ in the single-user case, where $r(s)$ is the approximated rate at state $s$ (c.f. Eq.~\eqref{eq: state-rate}). However, for the multi-user case, the approximate delivery time will be somewhat different. The approximated delivery time $\hat{T}_T$ when multiple users are served in parallel will be detailed in Sec IV, where we show that if $m(s)$ values are known, $\hat{T}_T$ is formulated as
\begin{equation} \label{eq: aprrox_deliv_time}
    \hat{T}_T = \frac{K}{\bar{t}+{\alpha}}\max_{s \in \CS}\frac{1-m(s)}{r(s)} \quad \textrm{[seconds]},
\end{equation}
where $\bar{t} = K\min\limits_{s \in [S]}m(s)$ is the minimum achievable \textit{global caching gain} given the non-uniform memory allocation. {We use $\alpha \leq L$ to denote the spatial multiplexing gain, which can be tuned for a given scenario based on, e.g., available transmit power, constraints on the beamformer design, the number of users in the network, etc. (c.f.~\cite{tolli2017multi})}.}

{Note that the $\bar{t}+{\alpha}$ term in the denominator represents a lower bound on the achievable DoF for the non-uniform memory allocation scenario (for the uniform allocation, the DoF of $K\frac{M}{S}+{\alpha}$ is achievable~\cite{tolli2017multi}). The term $K \max\limits_{s \in \CS}\frac{1-m(s)}{r(s)}$ approximates the worst-case delivery time across all the states when $K$ users are served simultaneously.} {Next}, we first rewrite~\eqref{eq: aprrox_deliv_time} as $\hat{T}_T = \frac{1}{\bar{m}+\frac{{\alpha}}{K}}\max_{s \in \CS}\frac{1-m(s)}{r(s)}$, where $\bar{m}$ equals $\min\limits_{s \in [S]}m(s)$. Then, to minimize the delivery time for any possible realizations of user locations, we formulate the memory allocation process as the following linear fractional programming (LFP) problem:
\begin{equation} 
\begin{aligned}
\label{cache-allocation}
&\min_{m(s), \; \gamma \ge 0, \; {m} \ge 0} \quad  \frac{\gamma}{{m}+\frac{{\alpha}}{K}}\\
&\textrm{s.t.} \quad  \frac{1-m(s)}{r(s)} \leq \gamma, \ \forall s \in \CS, \\ &{m} \leq m(s), \ \forall s \in \CS,  \quad \sum_{s \in \CS} m(s) \leq M.
\end{aligned}
\end{equation}
Note that at the optimal point, $ m = \bar{m} = \min\limits_{s \in [S]}m(s)$. Using Charnes-Cooper transformation~\cite{charnes1962programming}, this problem can be reformulated as an equivalent linear program (LP) as follows: 
\begin{equation} 
\begin{aligned}
\label{cache-allocation_dif_conv}
&\min_{m^{'}(s), \; \gamma^{'} \ge 0, \; {m}^{'} \ge 0 {,} \xi \ge 0} \quad  \gamma^{'}\\
&\textrm{s.t.} \quad  \frac{\xi-m^{'}(s)}{r(s)} \leq \gamma^{'}, \ \forall s \in \CS, \\ &\bar{m}^{'} \leq m^{'}(s), \ \forall s \in \CS, \quad \sum_{s \in \CS} m^{'}(s) \leq M\xi,
\quad {m}^{'} + \frac{{\alpha}}{K}\xi = 1.
\end{aligned}
\end{equation}
Note that after solving this problem, the actual allocated memory would be $m(s) = m^{'}(s)/\xi, \forall s$, which also implies ${m} = {m}^{'}/\xi$.

\begin{remark}
Substituting the term $\frac{{\alpha}}{K}$ in~\eqref{cache-allocation} with a general constant term $\phi$ enables a trade-off between the local and global caching gains. Selecting a large $\phi \gg \frac{{\alpha}}{K}$ prioritizes the local caching gain $m(s)$ at the expense of the minimum global caching gain $\bar{t}=K\bar{m}$ (as the denominator in the objective function becomes almost constant).
On the other hand, if $\phi \ll \frac{{\alpha}}{K}$, the minimum allocated memory $\bar{m}$ converges to $\frac{M}{S}$ and the minimum global caching gain is maximized at the cost of lower local caching gain for the states with poor expected connectivity $r(s)$, resulting in higher {delivery time} fluctuations.
\end{remark}

\noindent\textbf{Cache Arrangement:}
After the memory allocation process, we store data fragments in the cache memories of the users following a similar method as proposed in~\cite{MaddahAli-2014}. To this end, for every state $s \in \CS$, we first split $W(s)$ into~$\binom{K}{t(s)}$ sub-files denoted by $W_{\CV(s)}(s)$, where $t(s) = K m(s)$ and $\CV(s)$ can be any subset of the user-set $\CK$ with $|\CV(s)| = t(s)$. Then, at the cache memory of user $k \in \CK$, we store $W_{\CV(s)}(s)$ for every state $s \in \CS$ and every subset $\CV(s) \ni k$. The cache arrangement process is outlined in Algorithm~\ref{Alg:placement}. For simplicity, here we assume that for every $s\in\CS$, $m(s)>0$, and $t(s)$ is an integer (the general case where these assumptions are not necessarily met is discussed in Appendix~\ref{sec: Appendix A}). Also, for notational simplicity, we ignore the brackets and commas while explicitly referring to a given $\CV(s)$, e.g., $W_{ij}(s) \equiv W_{\{i,j\}}(s)$. 

\begin{algorithm}[t]
	\caption{Location-based cache placement}
	\begin{algorithmic}[1]
		\Procedure{CACHE\_PLACEMENT}{}
		\State solve for $\{m(s) \}$ with~\eqref{cache-allocation}
		
		\ForAll{$s \in \CS$} %\Comment{This is the beginning cache-arrangement.}
		
		\State $t(s) = K \times m(s)$ 
		
		\State $W(s) \rightarrow \{W_{\CV(s)}(s) \; | \; \CV(s) \subseteq \CK, |\CV(s)| = t(s)\}$
		
		\ForAll{$\CV(s)$}
		\ForAll{$k \in \CK$}
		\If{$k \in \CV(s)$}
		    \State Store $W_{\CV(s)}(s)$ in the cache of user $k$
		\EndIf
		\EndFor
		\EndFor
		\EndFor
		\EndProcedure 
	\end{algorithmic}
	\label{Alg:placement}
\end{algorithm}

\begin{exmp}
\label{exmp:placement}
Consider a simplified XR application scenario with $K=4$ users, where the application area is split into $S=5$ states, and for each state, the required data size is $F=400 \textrm{MB}$. Each user has a cache size of $900\textrm{MB}$; hence, the normalized cache size is $M = 2.25$ data units. Assume {that} the spatial distribution of the approximated normalized throughput is as given in Table~\ref{Table:rate&cache_distribution}, where the memory allocations resulting from solving~\eqref{cache-allocation} are also shown. 
%Then, the corresponding allocated memory for each state $s$ after solving~\eqref{cache-allocation} are as shown in Table~\ref{Table:rate&cache_distribution}. 
It can be easily verified that $t(1) = t(5) = 1$, $t(2) = t(4) = 2$, and $t(3) = 3$. As a result, $W(1)$, $W(3)$ and $W(5)$ should be split into $4$ sub-files, while $W(2)$ and $W(4)$ are split into $\binom{4}{2}=6$ sub-files. The resulting cache placement is visualized in Figure~\ref{fig:cache pool}.

\begin{table}[t]
\centering
\begin{tabular}{|c||c|c|c|c|c|}
\cline{2-6}
 \multicolumn{1}{c|}{} & $s\!=\!1$   & $s\!=\!2$   & $s\!=\!3$   & $s\!=\!4$   & $s\!=\!5$    \\ \hline
$r(s)$                               & $3\times 10^3$    & $2\times 10^3$   & $1\times 10^3$   & $2\times 10^3$   & $3\times 10^3$    \\ \hline
$m(s)$                               & 0.25 & 0.5 & 0.75 & 0.5 & 0.25 \\ \hline
\end{tabular}
\caption{Location-specific rate and memory allocation for Example~\ref{exmp:placement}.}
\label{Table:rate&cache_distribution}
\end{table}

\begin{figure}[ht]
    \Large
    \centering
    \resizebox{\columnwidth}{!}{%
    \begin{tabular}{c}
        \begin{tabular}{|c|c|c|c|c|c|c|c|c|c|c|c|c|c|}
            \cline{2-13}
            \multicolumn{1}{c|}{} & \multicolumn{3}{c|}{$\,\quad \quad W_1(s) \ \quad$} & \multicolumn{3}{c|}{$\,\quad \quad W_2(s) \ \quad$} & \multicolumn{3}{c|}{$\,\ \quad W_3(s) \ \quad$} & \multicolumn{3}{c|}{$\,\ \quad W_4(s) \ \quad$} \\
            \hline
            \multirow{4}{*}{\rotatebox[origin=c]{90}{ $s=1,5$}} & \multicolumn{3}{c|}{\cellcolor{gray!50}} & 
            \multicolumn{3}{c|}{} & 
            \multicolumn{3}{c|}{} & 
            \multicolumn{3}{c|}{} & user $1$ \\
            \cline{2-14}
            & \multicolumn{3}{c|}{} & 
            \multicolumn{3}{c|}{\cellcolor{gray!50}} & 
            \multicolumn{3}{c|}{} & 
            \multicolumn{3}{c|}{} & user $2$ \\
            \cline{2-14}
            & \multicolumn{3}{c|}{} & 
            \multicolumn{3}{c|}{} & 
            \multicolumn{3}{c|}{\cellcolor{gray!50}} & 
            \multicolumn{3}{c|}{} & user $3$ \\
            \cline{2-14}
            & \multicolumn{3}{c|}{} & 
            \multicolumn{3}{c|}{} & 
            \multicolumn{3}{c|}{} & 
            \multicolumn{3}{c|}{\cellcolor{gray!50}} & user $4$ \\
            \hline
            \multicolumn{14}{c}{}\\[-1.6em]
        \end{tabular} 
        \begin{tabular}{|c|c|c|c|c|c|c|c|c|c|c|c|c|c|}
            \cline{2-13}
            \multicolumn{1}{c|}{} & \multicolumn{2}{c|}{$W_{12}(s)$} & \multicolumn{2}{c|}{$W_{13}(s)$} & \multicolumn{2}{c|}{$W_{14}(s)$} & \multicolumn{2}{c|}{$W_{23}(s)$} & \multicolumn{2}{c|}{$W_{24}(s)$} & \multicolumn{2}{c|}{$W_{34}(s)$}  \\
            \hline
            \multirow{4}{*}{\rotatebox[origin=c]{90}{ $s=2,4$}} & \multicolumn{2}{c|}{\cellcolor{gray!50}} & 
            \multicolumn{2}{c|}{\cellcolor{gray!50}} & 
            \multicolumn{2}{c|}{\cellcolor{gray!50}} & 
            \multicolumn{2}{c|}{} & \multicolumn{2}{c|}{} & \multicolumn{2}{c|}{} & user $1$ \\
            \cline{2-14}
            & \multicolumn{2}{c|}{\cellcolor{gray!50}} & 
            \multicolumn{2}{c|}{} & 
            \multicolumn{2}{c|}{} & 
            \multicolumn{2}{c|}{\cellcolor{gray!50}} & \multicolumn{2}{c|}{\cellcolor{gray!50}} & \multicolumn{2}{c|}{} & user $2$ \\
            \cline{2-14}
            & \multicolumn{2}{c|}{} & 
            \multicolumn{2}{c|}{\cellcolor{gray!50}} & 
            \multicolumn{2}{c|}{} & 
            \multicolumn{2}{c|}{\cellcolor{gray!50}} & \multicolumn{2}{c|}{} & \multicolumn{2}{c|}{\cellcolor{gray!50}} & user $3$ \\
            \cline{2-14}
            & \multicolumn{2}{c|}{} & 
            \multicolumn{2}{c|}{} & 
            \multicolumn{2}{c|}{\cellcolor{gray!50}} & 
            \multicolumn{2}{c|}{} & \multicolumn{2}{c|}{\cellcolor{gray!50}} & \multicolumn{2}{c|}{\cellcolor{gray!50}} & user $4$ \\
            \hline
            \multicolumn{14}{c}{}\\[-1.6em]
        \end{tabular}  
        \begin{tabular}{|c|c|c|c|c|c|c|c|c|c|c|c|c|c|}
            \cline{2-13}
            \multicolumn{1}{c|}{} & \multicolumn{3}{c|}{$\;\ \ W_{123}(s) \,\quad$} & \multicolumn{3}{|c}{$\;\ \ W_{124}(s) \,\quad$} & \multicolumn{3}{|c}{$\;\ \ W_{134}(s) \,\quad$} & \multicolumn{3}{|c|}{$\;\ \ W_{234}(s) \,\quad$} \\
            \hline
            \multirow{4}{*}{\rotatebox[origin=c]{90}{ $s=3$}} & \multicolumn{3}{c|}{\cellcolor{gray!50}} & 
            \multicolumn{3}{c|}{\cellcolor{gray!50}} & 
            \multicolumn{3}{c|}{\cellcolor{gray!50}} & 
            \multicolumn{3}{c|}{} & user $1$ \\
            \cline{2-14}
            & \multicolumn{3}{c|}{\cellcolor{gray!50}} & 
            \multicolumn{3}{c|}{\cellcolor{gray!50}} & 
            \multicolumn{3}{c|}{} & 
            \multicolumn{3}{c|}{\cellcolor{gray!50}} & user $2$ \\
            \cline{2-14}
            & \multicolumn{3}{c|}{\cellcolor{gray!50}} & 
            \multicolumn{3}{c|}{} & 
            \multicolumn{3}{c|}{\cellcolor{gray!50}} & 
            \multicolumn{3}{c|}{\cellcolor{gray!50}} & user $3$ \\
            \cline{2-14}
            & \multicolumn{3}{c|}{} & 
            \multicolumn{3}{c|}{\cellcolor{gray!50}} & 
            \multicolumn{3}{c|}{\cellcolor{gray!50}} & 
            \multicolumn{3}{c|}{\cellcolor{gray!50}} & user $4$ \\
            \hline
        \end{tabular}
    \end{tabular}%
    }
    \caption{Cache placement visualization for Example~\ref{exmp:placement}.}
    \label{fig:cache pool}
\end{figure}

\end{exmp}

{To show that the memory constraint is strictly satisfied, we remind that each $W(s)$ is split into $\binom{K}{t(s)}$ subfiles $W_{\mathcal{V}(s)}(s)$, where $\mathcal{V}(s)$ can be any subset of users with size $t(s)$. Then, each user~$k$ stores every $W_{\mathcal{V}(s)}(s)$ for which $k \in \mathcal{V}(s)$. In other words, $W(s)$ is split into $\binom{K}{t(s)}$ sub-files, from which $\binom{K-1}{t(s)-1}$ sub-files are stored in the cache memory of each user.} Hence, the total memory size dedicated to $W(s)$ at each user is
\begin{equation}
    \frac{\binom{K-1}{t(s)-1}}{\binom{K}{t(s)}} = \frac{t(s)}{K} = m(s) \; ,
\end{equation} 
i.e., the proposed algorithm satisfies the cache size constraints. 

\section{Asymmetric Cache-aided Content Delivery}
\label{sec:delivery}
At the beginning of the delivery phase, every user $k \in \CK$ reveals its requested file $W_{d_k} \equiv W(s_k)$. Note that, according to the system model, $W_{d_k}$ depends on the state $s_k$ where user $k$ is located. The server then builds and transmits several \textit{nested} codewords,\footnote{In this paper, we consider the NCM scheme~\cite{tang2011full} to support multi-rate transmission. However, the scheme is oblivious to the modulation procedure, and any other multi-rate modulation scheme could be used (e.g.,~\cite{kramer_broadcast_channel_side_information_ITW2007, chen2010novel, tang2011full,asadi_side_information_broadcast_multi_rate_TIT2015}).} such that after receiving the codewords, all the users can reconstruct their requested files. As detailed in Section~\ref{sec:sysmodel}, user $k$ requires a total amount of one normalized data unit to reconstruct $W_{d_k}$. However, only a subset of this data, with size $m_k \equiv m(s_k)$, is available in its cache, and the remaining part should be delivered by the server. Note that the conventional multi-server \ac{CC}-based delivery schemes (e.g.,~\cite{shariatpanahi2018physical} and~\cite{tolli2017multi}) that assume all users cache the same amount of data do \textit{not} apply to our considered scenario where each user has cached a different amount of its requested file. Thus, a new delivery mechanism is required to achieve a proper multicasting gain.

The new delivery algorithm is outlined in Algorithm~\ref{Alg:Delivery}. First, the server builds and transmits multiple transmission vectors $\mathbf{x}_{\Bar{\CK}}$ in a \ac{TDMA} manner for every subset of users $\Bar{\CK}  \subseteq \CK : |\Bar{\CK}| = \hat{t} + {\alpha}$, where $\hat{t} = \min_{k \in \CK} t_k$ is the common global caching gain and $t_k \equiv t(s_k)$.\footnote{Note that in the placement phase (Section~\ref{sec:cache_placement}), we used $\bar{t} =  \min_{s \in \CS} t(s)$, i.e., the minimum was taken over all the states. However, during the delivery phase where the actual locations of the users are known, we use $\hat{t} =  \min_{k \in \CK} t_k$ to take the minimum over all the users' locations.} The transmitted signal vector 
\begin{equation}\label{eq: multiplexed signal} 
    \Bx_{\bar{\CK}} = \sum_{\CU \subseteq \bar{\CK}, |\CU| = \hat{t}+1}\Bv_{\CU}x_{\CU} \; , 
\end{equation}
is comprised of multiple nested codewords $x_{\CU}$, where $\CU$ can be any subset of $\Bar{\CK}$ with $|\CU| = \hat{t} + 1$. {The elements (constellation points) of every nested codeword $x_{\mathcal{U}}$ are drawn from complex Gaussian distribution such that $\mathbb{E}[|x_{\mathcal{U}}|^2]=1$.  The details of the nesting operation, as well as the coding and decoding procedure, are explained in~\cite[Section 4]{tang2011full}.} Also, every $x_{\CU}$ is precoded with a tailored beamformer vector $\Bv_{\CU} \in \mathbb{C}^{L}$, designed to suppress (or null-out) the interference caused by $x_{\CU}$ on every user in $\bar{\CK} \setminus \CU$. After the transmission of $\Bx_{\bar{\CK}}$, the corresponding received signal at user $k \in \bar{\CK}$ follows equation~\eqref{eq:recieved_signal_sysmodel}. 

\begin{algorithm}[t]
    \caption{NCM-based Content Delivery}
	\begin{algorithmic}[1]
		\Procedure{DELIVERY}{}
		
		\State $\hat{t} = \min_{k \in \CK} t_k$
		\ForAll{$\Bar{\CK} \subseteq \CK : |\Bar{\CK}| = \hat{t} + {\alpha}$} \label{alg:delivery_K_subsets}
		\State $\Bx_{\Bar{\CK}} \gets 0$
		    \ForAll{$\CU \subseteq \Bar{\CK} : |\CU| = \hat{t}+1$}\label{alg:delivery_U_subsets}
		        \State $x_{\CU} \gets 0$
		        \ForAll{$k \in \CU$}
		            \State ${\varphi_k} \gets \binom{t_k}{\hat{t}}\binom{K-\hat{t}-1}{{\alpha}-1}$, ${G_{\CU,k}} \gets 0$, $\CU_{-k} \gets \CU \backslash \{k \}$
		            \ForAll{$\CV_k \subseteq \CK : |\CV_k| = t_k$}
		                \If{$\CU_{-k} \subseteq \CV_k$, $k \not\in \CV_k$} \label{alg:delivery_tK_subsets}
		                \State $W_{\CV_k,k}^q \gets$ \textsc{Chunk}($W_{\CV_k,k}, {\varphi_k}$)
		                \State ${G_{\CU,k}} \gets$ \textsc{Concat} $ ({G_{\CU,k}}, W_{\CV(s_k),k}^q)$ 
		            \EndIf
		            \EndFor
		        \State $x_{\CU} \gets$ \textsc{Nest} $(x_{\CU}, {G_{\CU,k}})$ 
		        \EndFor
		        \State $\Bx_{\Bar{\CK}} \gets \Bx_{\Bar{\CK}} + \Bv_{\CU}x_{\CU}$
		    \EndFor
		    \State Transmit $\Bx_{\Bar{\CK}}$
		\EndFor
		\EndProcedure
	\end{algorithmic}
	\label{Alg:Delivery}
\end{algorithm}

The nested codeword $x_{\CU}$ is built to include a useful data term/packet ${G_{\CU,k}}$ for every user $k \in \CU$, {i.e., $x_{\CU} = \underset{k \in \CU}{*}\big(G_{\CU,k}\big)$, where $(*)$ denotes the nesting operation (c.f.,~\cite{tang2011full}}). The data term ${G_{\CU,k}}$ is chosen to be available in the cache memory of every other user in $\CU \setminus \{k\}$, so that these users can remove its interference using their cache contents. To satisfy this condition, denoting $\CU_{-k} \equiv \CU \setminus \{k\}$, we build ${G_{\CU,k}}$ to include (parts of) every \textit{suitable} sub-file $W_{\CV(s_k),k}$ for which $\CU_{-k} \subseteq \CV(s_k)$ and $k \not\in \CV(s_k)$. However, since $W_{\CV(s_k),k}$ is cached in the cache memory of every user in $\CV(s_k)$, and also because $|\CU_{-k}| = \hat{t} \le t_k = |\CV(s_k)|$,  we may find more than one suitable sub-file $W_{\CV(s_k),k}$ to be included in ${G_{\CU,k}}$. In fact, there exists exactly 
\begin{equation*} 
    \chi_k = \binom{K-\hat{t}-1}{t_k-\hat{t}}
\end{equation*}
suitable sub-files for inclusion in ${G_{\CU,k}}$, which should be split into smaller parts and \textit{concatenated} while building $x_{\CU}$.
Note that every sub-file $W_{\CV(s_k),k}$ appears in $\binom{t_k}{\hat{t}}$ different $\CU_{-k}$ sets (c.f. step~\ref{alg:delivery_tK_subsets} in Algorithm~\ref{Alg:Delivery}), and each user set $\CU$ is targeted $\binom{K-\hat{t}-1}{{\alpha}-1}$ times during the delivery phase (c.f. steps~\ref{alg:delivery_K_subsets} and~\ref{alg:delivery_U_subsets}).
Hence, to send fresh content in each transmission, we need to divide every sub-file $W_{\CV(s_k),k}$ suitable for user $k$ into exactly
\begin{equation*} 
    {\varphi_k} = \binom{t_k}{\hat{t}}\binom{K-\hat{t}-1}{{\alpha}-1}
\end{equation*}
equal-sized segments (denoted by $W_{\CV(s_k),k}^q$ in Algorithm~\ref{Alg:Delivery}) before the concatenation. In other words, we split every suitable sub-file into ${\varphi_k}$ segments, and then concatenate $\chi_k$ number of these segments to build ${G_{\CU,k}}$, {i.e., $G_{\CU,k} = \underset{\substack{\CV(s_k) \subseteq \CK\setminus k, \\ \CU_{-k} \subseteq \CV(s_k)}}{\prod}\big(W^{q}_{\CV(s_k),k}\big)$, where $\prod(A,B)$ is the bitwise concatenation of files $A$ and $B$}. 

The function $\textsc{Chunk}$ in Algorithm~\ref{Alg:Delivery} ensures none of the segments of a sub-file is sent twice, and {functions $\textsc{Concat}$ and $\textsc{Nest}$ denote bit-wise concatenation ($\prod$) and nesting operation $(*)$, respectively.} We will later discuss in section~\ref{Sec:beamforming} that using the nesting operation (c.f.~\cite{tang2011full}) to create codeword $x_{\CU}$, we can simultaneously transmit every ${G_{\CU,k}}$ with rate $R_k$ such that $\frac{|{G_{\CU,k}}|}{R_k} = \frac{|{G_{\CU,i}}|}{R_i}, \forall(k,i) \in \CU$.

\begin{exmp}
\label{exmp:interference-free-delivery}
Consider the network in Example~\ref{exmp:placement}, for which the cache placement is visualized in Figure~\ref{fig:cache pool}. Assume {that} there exist two antennas at the transmitter (i.e., $L = {\alpha} =2$). Let us consider a specific time slot, in which $s_1 = 1$, $s_2 =2$, $s_3 =4$, $s_4 = 5$. Denoting the set of requested sub-files for user $k$ with $\CT_k$ and assuming $A \equiv W(1)$, $B \equiv W(2)$, $C \equiv W(4)$, and $D \equiv W(5)$, we have
\begin{equation}  \small
    \begin{aligned}
    \CT_1 = \{ A_{2}, A_{3}, A_{4} \}, \qquad
    \CT_2 = \{ B_{13}, B_{14}, B_{34} \}, \quad
    \CT_3 = \{ C_{12}, C_{14}, C_{24} \}, \quad
    \CT_4 = \{ D_{1}, D_{2}, D_{3} \}. 
    \end{aligned}
\end{equation}
Note that the size of the sub-files of $A,B,C,D$ are $\frac{1}{4}$, $\frac{1}{6}$, $\frac{1}{6}$, and $\frac{1}{4}$ data units, respectively. As $L=2$ and the common global caching gain is $\hat{t} = 1$, our proposed algorithm can deliver data to $\hat{t}+{\alpha} = 3$ users during each transmission. Let us consider the transmission vector $ \Bx_{123}$ for users $\Bar{\CK} = \{1,2,3\}$. 
Following equation~\eqref{eq: multiplexed signal}, we have $\Bx_{123} = \Bv_{12}x_{12} +\Bv_{13}x_{13} +\Bv_{23}x_{23}$, where the nested codewords $x_{12}$, $x_{13}$, and $x_{23}$ deliver a portion of the requested data to user sets $\{1,2\}$, $\{1,3\}$ and $\{2,3\}$, respectively. Based on the users' request sets $\CT_k$, for each of the nested codewords $x_{12}$ and $x_{13}$, there exists only one suitable sub-file for user~$1$ (i.e., $ \chi_1 = 1$), and these sub-files should be split into $\alpha_1 = 2$ segments. However, for users~2 and~3, we have $\chi_2 = \chi_3 = 2$ and the segmentation factor is $\alpha_2 = \alpha_3 = 4$. As a result, $x_{12}$ is built as $x_{12} = A_{2}^{1} * \prod(B_{13}^{1}, B_{14}^{1}),$ where %the operator ($*$) denotes the nesting operation and $\prod(A,B)$ represents the bit-wise concatenation of data segments $A$ and $B$ (
superscripts are used to differentiate various segments of a sub-file. The nesting operation in $x_{12}$ is performed such that $A_2^{1}$ and $\prod(B_{13}^{1}, B_{14}^{1})$ are delivered with proportional rates $R_1 = \frac{3}{2}*R_2$. Following the same procedure to build $x_{13}$ and $x_{23}$, the transmission vector $ \Bx_{123}$ is formed as
\begin{equation}\nonumber  \small
\begin{aligned}
    \Bx_{123} =& \Bv_{12} \! \left(A_2^{1} *\prod(B_{13}^{1},B_{14}^{1})\right) + \Bv_{13} \! \left(A_3^{1} *\prod(C_{12}^{1}, C_{14}^{1})\right) +\Bv_{23} \! \left(\prod(B_{13}^{2},B_{34}^{1}) *\prod(C_{12}^{2},C_{24}^{1})\right).
\end{aligned}
\end{equation}

Now, let us consider the decoding process for $\Bx_{123}$ at user~$1$. Following Eq.~\eqref{eq:recieved_signal_sysmodel}, user~$1$ receives
\begin{equation}\nonumber \small
\begin{aligned}
        y_1 &=  \underline{A_2^{1} *\prod(B_{13}^{1},B_{14}^{1})}\Bh_1^H\Bv_{12} + \underline{A_3^{1} *\prod(C_{12}^{1},C_{14}^{1})}\Bh_1^H\Bv_{13} + w_1,
\end{aligned}
\end{equation}
where the term $w_1 \equiv \Bv_{23} \left(\prod(B_{13}^{2},B_{34}^{1}) *\prod(C_{12}^{2},C_{24}^{1})\right) + z_1$ contains the (suppressed) interference terms and noise at user~1. Now, to recover its requested data terms $A_{2}^1$ and $A_{3}^1$, user~$1$ has to jointly decode the two underlined messages (using successive interference cancellation (SIC), c.f.~\cite{tolli2017multi}), benefiting from its cache contents (i.e., $\prod(B_{13}^{1},B_{14}^{1})$ and $\prod(C_{12}^{1},C_{14}^{1})$) as a priori knowledge for demodulation. {Specifically, to achieve a symmetric rate over two messages $A^1_2$ and $A_3^1$, a SIC receiver combined with appropriate time-sharing between different decoding orders in the resulting MAC region is required~\cite{tolli2017multi}. For example, the server first allocates rates such that user $1$ is able to first decode $A^1_2$ assuming $A_3^1$ as interference, then remove $A^1_2$ from $y_1$ and finally decode $A_3^1$ interference-free. Then, for another time interval, the rate allocation changes such that user $1$ decodes $A_3^1$ first and $A^1_2$ last.} Similarly, users~2 and~3 can also decode $\{B_{13}^{1}, \ B_{13}^{2}, \ B_{14}^{1}, \ B_{34}^{1}\}$ and $\{C^1_{12}$, $C^2_{12}$ $C^1_{14}$, $C^1_{24}\}$ from $X_{123}$, respectively. Following algorithm~\ref{Alg:Delivery}, {the other three transmissions to deliver all the missing parts are} 
\begin{equation*} \small
    \begin{aligned}
        \Bx_{124} =& \Bv_{12} \left(A_2^{2} *\prod(B_{13}^{3}, B_{14}^{2})\right) + \Bv_{14} \left(A_4^{1} * D_{1}^{1}\right) + \Bv_{24} \left(\prod(B_{14}^{3}, B_{34}^{2})*D_{2}^{1}\right),  \\
        \Bx_{134} =& \Bv_{13} \left(A_3^{2} *\prod(C_{12}^{3}, C_{14}^{2})\right) + \Bv_{14} \left(A_4^{2} *D_{1}^{2}\right) + \Bv_{34} \left(D_{3}^{1}*\prod(C_{14}^{3}, C_{24}^{2})\right),  \\
        \Bx_{234} =& \Bv_{23} \left(\prod(B_{13}^{4}, B_{34}^{4}) *\prod(C_{12}^{4}, C_{24}^{3})\right) + \Bv_{24} \left(D_2^{2} *\prod(B_{14}^{4}, B_{34}^{3})\right) +  \Bv_{34} \left(D_{3}^{2}*\prod(C_{14}^{4}, C_{24}^{4})\right).  
    \end{aligned}
\end{equation*}
Note that compared to~\cite{tolli2017multi}, in all these transmissions, we serve users~$1$ and~$4$ with a higher rate ($1.5$ times) compared to users~$2$ and~$3$ using the NCM modulation~\cite{tang2011full}.
\end{exmp}

\begin{lemma}\label{theorm: delivery}
Using the proposed cache placement and content delivery algorithms, every user receives its requested data.
\end{lemma}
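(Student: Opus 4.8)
The plan is to verify two properties and combine them: \textbf{completeness} --- over all transmissions $\Bx_{\bar{\CK}}$ the segments addressed to a fixed user $k$ together form exactly the portion of $W_{d_k}$ absent from its cache, each segment sent once --- and \textbf{decodability} --- from every $\Bx_{\bar{\CK}}$ with $k\in\bar{\CK}$, user $k$ can recover each useful term $G_{\CU,k}$ destined for it. Together these give the claim that user $k$ reconstructs $W_{d_k}$.

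For completeness, I would recall from Algorithm~\ref{Alg:placement} that the part of $W_{d_k}$ missing at user $k$ is precisely the family of sub-files $W_{\CV_k,k}$ with $|\CV_k|=t_k$ and $k\notin\CV_k$ (there are $\binom{K-1}{t_k}$ of them, of total size $1-m_k$). Fix one such sub-file. By the construction of $G_{\CU,k}$ it is \emph{suitable} for a codeword $x_{\CU}$ exactly when the $\hat{t}$-set $\CU_{-k}$ satisfies $\CU_{-k}\subseteq\CV_k$, giving $\binom{t_k}{\hat{t}}$ admissible $\CU$; and for each such $\CU$ the codeword $x_{\CU}$ is built once inside $\Bx_{\bar{\CK}}$ for every $(\hat{t}+\alpha)$-subset $\bar{\CK}\supseteq\CU$, i.e.\ $\binom{K-\hat{t}-1}{\alpha-1}$ times. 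Counting the pairs $(\CU_{-k},\bar{\CK})$ therefore shows that $W_{\CV_k,k}$ is carried in exactly $\binom{t_k}{\hat{t}}\binom{K-\hat{t}-1}{\alpha-1}=\varphi_k$ codewords, which is exactly the number of equal pieces into which \textsc{Chunk} cuts it; since \textsc{Chunk} never re-sends a piece, each of the $\varphi_k$ segments is delivered once and the whole sub-file is delivered. Summing over all missing sub-files of user $k$ yields completeness with no redundancy. (The dual count over $i\in\CU$, fixing $\bar{\CK}$ and $\CU$, reproduces $\chi_k=\binom{K-\hat{t}-1}{t_k-\hat{t}}$ suitable sub-files per user per codeword, matching Algorithm~\ref{Alg:Delivery}.)

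For decodability, I would split the signal user $k$ receives after $\Bx_{\bar{\CK}}$ into three parts. First, the codewords $x_{\CU}$ with $k\notin\CU$: these are removed because $\Bv_{\CU}$ is designed to zero-force the $|\bar{\CK}\setminus\CU|=\alpha-1\le L-1$ users in $\bar{\CK}\setminus\CU$, feasible since $\alpha\le L$. Second, inside each surviving codeword $x_{\CU}=\underset{i\in\CU}{*}(G_{\CU,i})$ with $k\in\CU$, the layers $G_{\CU,i}$ with $i\in\CU_{-k}$: each such layer is assembled only from sub-files $W_{\CV(s_i),i}$ with $\CU_{-i}\subseteq\CV(s_i)$, and since $k\in\CU_{-i}$ these sub-files lie in user $k$'s cache, so the NCM decoder uses them as a priori side information and peels the layers off. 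Third, what remains are the $\binom{\hat{t}+\alpha-1}{\hat{t}}$ desired streams $G_{\CU,k}$ for $k\in\CU\subseteq\bar{\CK}$, which form a multiple-access region at user $k$ and are jointly decoded by SIC with time-sharing over decoding orders, exactly as in the symmetric multi-antenna CC scheme. The construction of the precoders $\Bv_{\CU}$ and the rate allocation that make this last step succeed with the per-stream durations $|G_{\CU,k}|/R_k$ equalized across $k\in\CU$ are treated in Section~\ref{Sec:beamforming} and~\cite{tolli2017multi}, so I would invoke them rather than reprove them here.

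The main obstacle I expect is the counting in the completeness step: one has to be sure the number of distinct transmitted instances containing a fixed missing sub-file equals $\varphi_k$ \emph{exactly} --- too few would leave part of $W_{d_k}$ undelivered, too many would force \textsc{Chunk} to repeat a segment --- which means carefully isolating the two independent sources of multiplicity ($\binom{t_k}{\hat{t}}$ eligible $\CU_{-k}$ and $\binom{K-\hat{t}-1}{\alpha-1}$ supersets $\bar{\CK}$) and confirming they combine cleanly. The decodability step is comparatively routine once the cache-membership conditions are tracked, with the quantitative rate claim deferred to Section~\ref{Sec:beamforming}. I would also state the standing assumptions that $\hat{t}+\alpha\le K$ and that each $t(s)$ is a positive integer, the general case being handled in Appendix~\ref{sec: Appendix A}.
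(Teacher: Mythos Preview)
Your proposal is correct and your completeness step is essentially the paper's proof, just argued at a finer granularity. The paper simply counts the \emph{aggregate} data size delivered to user $k$ --- namely $\binom{K-1}{\hat{t}+\alpha-1}$ vectors, $\binom{\hat{t}+\alpha-1}{\hat{t}}$ codewords each, $\chi_k$ segments per codeword of size $1/\big(\varphi_k\binom{K}{t_k}\big)$ --- and simplifies the product of binomials to $1-m_k$. You instead fix a single missing sub-file $W_{\CV_k,k}$ and count the distinct $(\CU,\bar{\CK})$ pairs that carry it, showing this equals $\varphi_k$; this is the same combinatorics read the other way, but your route additionally certifies that \textsc{Chunk} never runs out of fresh segments nor repeats one, i.e., non-redundancy, which the paper's aggregate count only implies after one believes the algorithm is well-posed.

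Your decodability discussion is extra relative to the paper's formal proof, which is purely a size-count and leaves interference removal and SIC decoding to the surrounding text and Section~\ref{Sec:beamforming}. One small inaccuracy: you state that the precoders $\Bv_{\CU}$ \emph{zero-force} the $\alpha-1$ non-targeted users, but in the paper's scheme they are optimized to \emph{suppress} that interference and treat the residual as noise in the SINR (cf.\ the term $w_1$ in Example~\ref{exmp:interference-free-delivery} and the constraint~\eqref{const:MSE terms}). Zero-forcing is of course feasible since $\alpha\le L$, so your argument remains a valid sufficient condition, but it is slightly stronger than what the paper actually assumes.
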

\begin{proof}
The user $k$ in state $s_k$ needs to receive $1-m_k$ data units during the delivery phase. This data is delivered by $\binom{K-1}{\hat{t}+{\alpha}-1}$ transmission vectors $\Bx_{\bar{\CK}}$ for which $k \in \bar{\CK} $, i.e., by all the user subsets $\bar{\CK}$ that include user $k$. However, the number of nested codewords $x_{\CU}$ for which $k \in \CU$ in every such vector $\Bx_{\bar{\CK}}$ is $\binom{\hat{t}+{\alpha}-1}{\hat{t}}$, and each $x_{\CU}$ delivers to user $k$ a data term ${G_{\CU,k}}$ that is comprised of $\chi_k$ segments each with size ${1}/{{\varphi_k} \binom{K}{t_k}}$ data units. Hence, the total data size delivered to user $k$ is  
$\frac{ \binom{K-1}{\hat{t}+{\alpha}-1}\binom{\hat{t}+{\alpha}-1}{\hat{t}} \binom{K-\hat{t}-1}{t_k - \hat{t}} }{ \binom{K}{t_k} \binom{t_k}{\hat{t}}  \binom{K-\hat{t}-1}{{\alpha} - 1}} = \frac{K-t_k}{K} = 1 - m_{k} .$
\end{proof}

\subsection{Weighted Max-Min Beamforming} \label{Sec:beamforming}

In this section, we illustrate how the beamforming vectors $\Bv_{\CU}$ in~\eqref{eq: multiplexed signal} are built. Note that, due to the underlying multi-rate transmission requirement of our proposed scheme, the optimized beamformer design in~\cite{tolli2017multi} is not readily applicable here. Thus, unlike~\cite{tolli2017multi} that considers max-min-fairness to design precoders, here we formulate the objective function as a weighted-max-min (WMM) problem, where the weights reflect the non-uniform amounts of data transmitted to different users. 

As discussed in Section~\ref{sec:delivery}, for the proposed scheme, data delivery is done using $\binom{K}{\hat{t}+{\alpha}}$ transmission vectors $\Bx_{\bar{\CK}}$. Also, every vector $\Bx_{\bar{\CK}}$ comprises $\binom{\hat{t}+{\alpha}}{\hat{t}+1}$ data terms $x_{\CU}$ and the same number of beamforming vectors $\Bv_{\CU}$, as shown in~\eqref{eq: multiplexed signal}.  As a result, after the transmission of $\Bx_{\bar{\CK}}$, the received signal in~\eqref{eq:recieved_signal_sysmodel} can be rewritten as
\begin{equation} \label{eq:received_signal} 
    y_{k} = \sum_{\CU \subseteq \bar{\CK}, \CU \ni k, |\CU| = \hat{t}+1}\Bh_{k}^{H}\underline{\Bv_{\CU}x_{\CU}} + \sum_{\CV \subseteq \bar{\CK} \setminus k, |\CV| = \hat{t}+1}\Bh_{k}^{H}\Bv_{\CV}x_{\CV} + z_k,
\end{equation}
where each of the $D = \binom{\hat{t}+{\alpha}-1}{\hat{t}}$ underlined terms in~\eqref{eq:received_signal} contains fresh data  for user $k$ with size 
\begin{equation} 
    c_k = \frac{\binom{K-\hat{t}-1}{t_k - \hat{t}} }{ \binom{K}{t_k} \binom{t_k}{\hat{t}}  \binom{K-\hat{t}-1}{{\alpha} - 1}} = \frac{1-m_k}{\binom{K-1}{\hat{t}+{\alpha}-1}\binom{\hat{t}+{\alpha}-1}{\hat{t}}},
\end{equation}
and the rest $I = \binom{\hat{t}+{\alpha}-1}{\hat{t+1}}$ non-underlined terms are seen as interference. Thus, from the user $k$'s perspective, $y_k$ is a \ac{MAC} with $D$ desired messages and $I$ interference terms. Let us use $\CD_k = \{\CU \ | \ \CU \subseteq \bar{\CK}, |\CU| = \hat{t}+1, \CU \ni k\}$ to denote the set of all the desired message indices for user $k$, i.e., $|\CD_k| = D$. To minimize the overall time to decode all the $D$ desired messages, they should all be transmitted with the same rate $ R_{k}$ , i.e.,
\begin{equation}\label{eq:MAC_rate}
    R_{k} = \min_{\CQ \subseteq \CD_k, |\CQ| \neq 0} \frac{1}{|\CQ|}R_{\text{sum}}^{\CQ},
\end{equation}
where $R_{\text{sum}}^{\CQ} = \log\left(1 + \frac{\sum_{\CU \in \CQ}|\Bh_{k}^{H}\Bv_{\CU}|^2}{\sum_{\CV \in \CI_k}|\Bh_{k}^{H}\Bv_{\CV}|^2 + \sigma^2}\right)$ denotes the sum rate over all $|\CQ|$ messages, and $\CI_k := \{\CV \ | \ \CV \subseteq \bar{\CK} \setminus k, |\CV| = \hat{t}+1\}$ is the set of interfering message indices for user $k$. Note that $R_{k}$ is the symmetric rate per message, and since user $k$ receives $D$ messages in each transmission, its overall symmetric rate would be $D R_{k}$. Moreover, since the total size of received data at this user is $Dc_k$, the required delivery time for user $k$ is $T_k = \frac{Dc_k}{D R_{k}} = \frac{c_k}{ R_{k}}$, and the delivery time for $\Bx_{\bar{\CK}}$ would be $T_{\bar{\CK}} =  \max\limits_{k \in \bar{\CK}}T_{k}$ seconds.

Now, as we aim to {minimize the delivery time}, the beamformer optimization problem can be formulated as $\underset{\{\Bv_{\CU}\}}\min \ \underset{k \in \bar{\CK}}\max \ T_{k}$, or equivalently as $\underset{\{\Bv_{\CU}\}}\max \ \underset{k \in \bar{\CK}}\min \ \frac{ R_{k}}{c_k}$. So, the weighted rate maximization for a given transmission can be formulated as
\begin{equation}
    \begin{aligned} 
	\label{opt:nonconvex_WMMF} \small	
		\underset{\substack{\{\Bv_{\CU}, \gamma^k_{\CU}, R_{\text{sum}}^{\CQ} \}}}{\max} & \underset{k \in \bar{\CK} }{\min} \ \underset{\substack{\CQ \subseteq {\CD_k}, \\ |\CQ| \neq 0}}{\min} \ 
        \frac{1}{c_k|\CQ|}R_{\text{sum}}^{\CQ}
	\\
	 \mathrm{s.  \ t.} \quad
       &R_{\text{sum}}^{\CQ} \leq \log\left(1 + \sum_{\CU \in \CQ}\gamma^{k}_{\CU}\right), \quad \forall k \in \bar{\CK}, \forall \CQ \subseteq {\CD_k}, |\CQ| \neq 0 \\
      &\gamma^{k}_{\CU} \leq \frac{|\Bh_{k}^{H}\Bv_{\CU}|^2}{\sum_{\CV \in \CI_k}|\Bh_{k}^{H}\Bv_{\CV}|^2 + \sigma^2}, \quad \forall k \in \bar{\CK}, \forall \CU \in {\CD}_k,
    \\
        &\sum_{\substack{\CU \subseteq \bar{\CK}, \\ |\CU| = \hat{t}+1}}\|\Bv_{\CU}\|^2\leq P_T,
\end{aligned}
\end{equation}
where $P_T$ is the total available power at the transmitter. Problem~\eqref{opt:nonconvex_WMMF} can be equivalently rewritten in epigraph form as
\begin{subequations}\label{opt:nonconvex_WMMF_epi}
\begin{align}
    \max_{\{\Bv_{\CU}, \gamma^k_{\CU}, R_{k}, R\}} &R \nonumber\\
    \textrm{s.t.} \quad &R \leq \frac{1}{c_k}R_k \, , \quad \forall k \in \bar{\CK} \; , \label{const: weighted rates}\\
    & R_k \leq \frac{1}{|\CQ|}\log\left(1 + \sum_{\CU \in \CQ}\gamma^{k}_{\CU}\right), \quad \forall k \in \bar{\CK}, \forall \CQ \subseteq {\CD_k}, |\CQ| \neq 0 \label{const:MAC_rate}\; ,   \\ 
    &\gamma^{k}_{\CU} \leq \frac{|\Bh_{k}^{H}\Bv_{\CU}|^2}{\sum_{\CV \in \CI_k}|\Bh_{k}^{H}\Bv_{\CV}|^2 + \sigma^2}, \quad \forall k \in \bar{\CK}, \forall \CU \in {\CD}_k,\label{const:MSE terms} \\
        &\sum_{\substack{\CU \subseteq \bar{\CK}, \\ |\CU| = \hat{t}+1}}\|\Bv_{\CU}\|^2\leq P_T \label{const:power}
\end{align}
\end{subequations}
{Note that $R$ is an auxiliary variable ensuring that user-specific rates $R_k$ are dedicated based on the corresponding weights $c_k$ (c.f.,~\eqref{const: weighted rates}). Moreover, condition~\eqref{const:MAC_rate} ensures that each user's dedicated rate $R_k$ lies within the MAC region. Auxiliary variables $\gamma^{k}_{\CU}$ are considered to help facilitate the convexification of conditions~\eqref{const:MAC_rate} and are limited to the message-specific SINR in~\eqref{const:MSE terms}, which is a non-convex constraint. Finally, constraint~\eqref{const:power} ensures that the dedicated power to the beamformers $\{\Bv_{\CU}\}$ does not exceed the available transmit power $P_T$.} 

Problem~\eqref{opt:nonconvex_WMMF_epi} is similar to the max-min optimization in~\cite{tolli2017multi} with the extra convex conditions~\eqref{const: weighted rates}. Thus, it can be efficiently solved following the same \ac{SCA} approach proposed in~\cite{tolli2017multi}. Here, we briefly recap the steps for the sake of completeness. First, Eq.~\eqref{const:MSE terms} is rewritten as
\begin{equation}\label{eq:MSE terms}
    \sum_{\CV \in \CI_k}|\Bh_{k}^{H}\Bv_{\CV}|^2 + \sigma^2 \leq \frac{\sum_{\CV \in \CI_k \cup \CU}|\Bh_{k}^{H}\bar{\Bv}_{\CV}|^2 + \sigma^2}{1+{\gamma}^{k}_{\CU}},
\end{equation}
and then, using the first order Taylor expansion, the right hand side of~\eqref{eq:MSE terms} is lower bounded by 
\begin{equation}\label{eq: MSE_approx}
\begin{aligned}
    \mathcal {L}(\mathbf {v}_{\mathcal {V}}, \mathbf {h}_{k}, \gamma _{\mathcal {U}}^{k}) \triangleq & \Big(\sum_{\CV \in \CI_k \cup \CU} \big( 2\mathbb{R}(\bar{\Bv}_{\CV}^H\Bh_{k}\Bh_{k}^{H}\Bv_{\CV}) -|\Bh_{k}^{H}\bar{\Bv}_{\CV}|^2\big)  \\ & \qquad- \frac{\sum_{\CV \in \CI_k \cup \CU}{|\Bh_{k}^{H}\bar{\Bv}_{\CV}|^2 + \sigma^2}}{1+\bar{\gamma}^{k}_{\CU}}({\gamma}^{k}_{\CU}-\bar{\gamma}^{k}_{\CU})+ \sigma^2 \Big) \frac{1}{{1+\bar{\gamma}^{k}_{\CU}}},
\end{aligned}
\end{equation}
where, $\bar{\Bv}_{\CV}$ and $\bar{\gamma}^{k}_{\CU}$ are the fixed approximation points. Now, substituting~\eqref{eq: MSE_approx} in~\eqref{eq:MSE terms}, we can approximate~\eqref{opt:nonconvex_WMMF_epi} as the following convex problem
\begin{subequations}\label{opt: convex_WMMF_epi}
\begin{align}
    \max_{\{\Bv_{\CU}, \gamma^k_{\CU}, R_{k}, R\}} &R \nonumber\\
    \textrm{s.t.} \quad &R \leq \frac{1}{c_k}R_k \, , \quad \forall k \in \bar{\CK} \; , \\
    & R_k \leq \frac{1}{|\CQ|}\log\left(1 + \sum_{\CU \in \CQ}\gamma^{k}_{\CU}\right), \quad \forall k \in \bar{\CK}, \forall \CQ \subseteq {\CD_k}, |\CQ| \neq 0 \; \label{eq:mac_region_condition},   \\ 
    &\sum_{\CV \in \CI_k}|\Bh_{k}^{H}\Bv_{\CV}|^2 + \sigma^2 \leq \mathcal {L}(\mathbf {v}_{\mathcal {V}}, \mathbf {h}_{k}, \gamma _{\mathcal {U}}^{k}), \quad \forall k \in \bar{\CK}, \forall \CU \in {\CD}_k,\\
        &\sum_{\substack{\CU \subseteq \bar{\CK}, \\ |\CU| = \hat{t}+1}}\|\Bv_{\CU}\|^2\leq P_T.
\end{align}
\end{subequations}
Finally, following the same approach as~\cite{tolli2017multi}, the beamformers $\Bv_{\CV}$ are found by iteratively solving~\eqref{opt: convex_WMMF_epi} until the convergence. 

\begin{remark}\label{remark: approximated delivery time}
Denoting the total delivery time of the proposed scheme with $T_T$, we have
\begin{equation}\label{eq:actual_delivery_time}
    T_T = \frac{1}{\binom{K-1}{\hat{t}+{\alpha}-1}\binom{\hat{t}+{\alpha}-1}{\hat{t}}}\underset{\substack{\bar{\CK} \subseteq [K], \\ |\bar{\CK}| = \hat{t}+{\alpha}}}\sum\max_{k \in \bar{\CK}}\frac{1-m(s_k)}{ R_{k}} \; .
\end{equation}
This simply follows the fact that every user $k$ needs to receive $1-m(s_k)$ units of data from the server during the delivery phase, and this data is delivered using $\binom{\hat{t}+{\alpha}-1}{\hat{t}}$ data terms in $\binom{K-1}{\hat{t}+{\alpha}-1}$ transmission vectors (cf. the proof of Lemma~\ref{theorm: delivery}).
\end{remark}

It should be noted that although the discussions so far applied to the delivery phase, we also need an approximation of the expected delivery time to optimize the memory allocation during the placement phase (c.f Section~\ref{sec:cache_placement}). However, during the placement phase, actual user locations are not known. Hence, the common global caching gain $\hat{t}$, the actual achievable rates $R_k$, and the actual delivery time $T_T$ can not be computed. To tackle this issue, we use an approximation of $T_T$ assuming uniform access probability for all the states as follows. 
\begin{lemma}\label{lemm:aprox_delivery_time}
The total delivery time $T_T$ calculated in~\eqref{eq:actual_delivery_time} can be approximated as
\begin{equation}\label{eq:T_T_approx}
    \hat{T}_T = \frac{K}{\bar{t}+{\alpha}}\max_{s \in \CS}\frac{1-m(s)}{r(s)} \; .
\end{equation}
\end{lemma}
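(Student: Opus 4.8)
The plan is to begin from the exact delivery-time expression in~\eqref{eq:actual_delivery_time} and successively replace the three quantities that are unavailable when the caches are populated — the realized common caching gain $\hat t$, the achievable per-message rates $R_k$, and the realized state assignment $\{s_k\}_{k\in\CK}$ — by their placement-phase surrogates, after which the combinatorial prefactor collapses through a single binomial identity. So the argument is essentially one substitution per unknown, followed by bookkeeping.

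\emph{Step 1 (caching gain).} Since user locations are unknown during placement and all states are treated as equiprobable, I replace $\hat t=\min_{k\in\CK}t_k$ by $\bar t=\min_{s\in\CS}t(s)=K\min_{s\in\CS}m(s)$, the smallest global caching gain consistent with \emph{any} assignment of users to states. This is exactly the worst-case lower bound on the achievable DoF already introduced in Section~\ref{sec:cache_placement}, so $\hat t+\alpha$ becomes $\bar t+\alpha$ throughout.

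\emph{Step 2 (rate and state maximum).} In a transmission $\Bx_{\bar\CK}$ each served user $k\in\bar\CK$ decodes $D=\binom{\bar t+\alpha-1}{\bar t}$ fresh streams, so its overall per-transmission rate is $DR_k$; the approximation invokes the DoF claim underlying the scheme, namely that each of the $\bar t+\alpha$ simultaneously served users enjoys essentially its single-user reference throughput, i.e.\ $DR_k\approx r(s_k)$ with $r(s_k)$ as in~\eqref{eq: state-rate}. Substituting $R_k\approx r(s_k)/D$ and then bounding the realization-dependent maximum by the uniform worst case, $\max_{k\in\bar\CK}\frac{1-m(s_k)}{R_k}\approx D\max_{k\in\bar\CK}\frac{1-m(s_k)}{r(s_k)}\le D\max_{s\in\CS}\frac{1-m(s)}{r(s)}$ (again because the occupied states are unknown in advance), makes all $\binom{K}{\bar t+\alpha}$ summands in~\eqref{eq:actual_delivery_time} identical, so
\[
\hat T_T=\frac{\binom{K}{\bar t+\alpha}\,D}{\binom{K-1}{\bar t+\alpha-1}\binom{\bar t+\alpha-1}{\bar t}}\,\max_{s\in\CS}\frac{1-m(s)}{r(s)}.
\]
Because $D=\binom{\bar t+\alpha-1}{\bar t}$ cancels the last denominator factor and $\binom{K}{\bar t+\alpha}=\frac{K}{\bar t+\alpha}\binom{K-1}{\bar t+\alpha-1}$, the prefactor reduces to $\frac{K}{\bar t+\alpha}$, which is precisely~\eqref{eq:T_T_approx}.

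The binomial bookkeeping in Step 2 is routine and the caching-gain substitution in Step 1 is by definition. The step carrying the real weight — and the reason the statement is an approximation rather than an identity — is the rate substitution $DR_k\approx r(s_k)$ together with replacing the average over transmission subsets and the realized-state maximum by the single worst-case $\max_{s\in\CS}$: one must argue that the per-user symmetric rate inside a power-shared $(\bar t+\alpha)$-user multicast transmission tracks the single-state reference rate $r(s)$. I would justify this as tight in the symmetric regime (all users at the same state with balanced beamforming and ideal interference suppression) and conservative otherwise, which is the appropriate notion of accuracy for a quantity used only to steer the placement-phase memory allocation.
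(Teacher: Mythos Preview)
Your argument is correct and uses precisely the same three ingredients as the paper --- the rate replacement $DR_k\to r(s_k)$, the upper bound $\max_{k\in\bar\CK}\frac{1-m(s_k)}{r(s_k)}\le\max_{s\in\CS}\frac{1-m(s)}{r(s)}$, and the substitution $\hat t\to\bar t$ --- together with the identity $\binom{K}{n}=\tfrac{K}{n}\binom{K-1}{n-1}$ to collapse the prefactor. The only difference is the order: the paper keeps $\hat t$ through the rate step and the binomial simplification, arriving at $\tfrac{K}{\hat t+\alpha}\max_{s\in\CS}\frac{1-m(s)}{r(s)}$, and only then invokes $\bar t\le\hat t$ to bound $\tfrac{K}{\hat t+\alpha}\le\tfrac{K}{\bar t+\alpha}$; you swap $\hat t$ for $\bar t$ at the outset. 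The paper's ordering has the minor advantage that each step is visibly an upper bound on the previous one (the rate step because $r(s_k)$ upper-bounds $DR_k$, the $\max$ step trivially, and the final step because $\bar t\le\hat t$ makes the denominator smaller), whereas in your ordering the first substitution alters all the binomial coefficients simultaneously and its one-sidedness is less transparent --- but since the lemma is stated as an approximation rather than a bound, this is a cosmetic rather than a substantive distinction.
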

\begin{proof}
Noting that the total dedicated rate to user $k$ for all the $D = \binom{\hat{t}+{\alpha}-1}{\hat{t}}$ messages is $DR_k$, we first substitute $DR_k$ with its upper bound $r(s_k)$ to approximate~\eqref{eq:actual_delivery_time} as
\begin{equation}\label{eq:T_T-first_approx}
    T_T \sim \frac{1}{\binom{K-1}{\hat{t}+{\alpha}-1}}\underset{\substack{\bar{\CK} \subseteq [K], \\ |\bar{\CK}| = \hat{t}+{\alpha}}}\sum\max_{k \in \bar{\CK}}\frac{1-m(s_k)}{r(s_k)} \; .
\end{equation}
Then, using inequality $\max\limits_{k \in \bar{\CK}}\frac{1-m(s_k)}{r(s_k)} \leq \max\limits_{s \in \CS}\frac{1-m(s)}{r(s)}$, we substitute the RHS of~\eqref{eq:T_T-first_approx} with its upper bound $\frac{\binom{K}{\hat{t}+{\alpha}}}{\binom{K-1}{\hat{t}+{\alpha}-1}}\max\limits_{s \in \CS}\frac{1-m(s)}{r(s)}$ to get  $\frac{K}{\hat{t}+{\alpha}}\max_{s \in \CS}\frac{1-m(s)}{r(s)}$. Finally, using inequality $\bar{t} \leq \hat{t}$, $T_T$ is approximated as~\eqref{eq:T_T_approx}.
\end{proof}

The delivery time approximation in~\eqref{eq:T_T_approx} can be further simplified by assuming that $R_w =  \frac{r(s)}{1-m(s)}$ is independent of the state $s$. The intuition behind this assumption is that with the proposed location-dependent cache placement and nested data delivery, the amount of data sent to each target user during every transmission is directly proportional to its delivery rate. With this assumption, we have $\hat{T}_T \sim \frac{K}{\hat{t}+{\alpha}} \frac{1}{R_w}$ and the symmetric rate will be $R_{w}^{s} = \frac{K}{T_T} = {(\hat{t}+{\alpha})R_{w}}$. Following a similar argument for the symmetric multi-antenna \ac{CC} scheme in~\cite{tolli2017multi}, the symmetric rate there could also be approximated as $R_u^s = \frac{K}{T_T} = {(t+{\alpha})R_u}$, where $R_{u} = \frac{\bar{r}}{1-M/S}$, $\bar{r}$ is the common max-min sum-rate, and $t = {KM}/{S}$ is the global caching gain (c.f.~\cite{tolli2017multi} Section~IV). This results in
\begin{equation}\label{eq:rate_ratio}
    \frac{R^{{s}}_{{w}}}{R^{{s}}_{{u}}} = \frac{(\hat{t}+{\alpha})R_{{w}}}{(t+{\alpha})R_{{u}}},
\end{equation}
which indicates that compared with the symmetric scheme of~\cite{tolli2017multi}, the proposed location-dependent scheme can improve the delivery time if the DoF loss resulting from the non-uniform cache placement (i.e., $\frac{\hat{t}+{\alpha}}{t+{\alpha}}\leq 1$) can be compensated by the rate improvement (i.e., $\frac{R_{{w}}}{R_{{u}}} \geq 1$) due to the multi-rate transmission support.
%----------------------------------------------------------------------------------------

\subsection{Resolving the Imbalanced Global Caching Gain Bottleneck}
\label{Sec:phantom user}
Our proposed location-dependent \ac{CC} scheme enables the global caching gain of $\hat{t} = \min\limits_{k \in [K]} t_k$ to be achieved together with the spatial multiplexing gain of ${\alpha}$, while also addressing the wireless connectivity bottleneck problem. However, the $\min$ operation in $\hat{t}$ could cause the global caching gain to vanish if a subset of users were located at states with strong wireless connectivity (i.e., a subset of users have small $t_k$ values). This is an undesired effect as the users with better channel conditions limit the performance improvement enabled by the underlying multi-antenna \ac{CC} mechanism.

To address this issue, we use the \emph{phantom user} concept introduced in~\cite{salehi2021lowsubpacketization}. In a nutshell, phantom users are virtual, non-existent users that are assumed to be part of the network when the transmission codewords are designed, and their effect is removed before the actual transmission. Considering the fact that the global caching gain $\hat{t}$ is limited by users with strong wireless connectivity, the idea is to exclude such users from the \ac{CC}-aided (i.e., multicast) delivery phase and serve them through multi-user unicasting (i.e., considering spatial multiplexing and (possible) local caching gains only). Then, to make \ac{CC}-aided delivery work for the rest of the users, the excluded users are substituted by the same number of phantom users, all located in poor-connectivity states (hence, with large $t_k$ values). This results in an improved global caching gain for users with poor channel conditions, as the $\min$ operation would {no longer be} limited to users with strong wireless connectivity. As discussed in~\cite{salehi2020lowcomplexity}, the DoF loss caused by phantom users can be (partly) compensated through an improved beamforming gain enabled by optimized beamformers.

The enhanced multicast content delivery with phantom users is summarized in Algorithm~\ref{Alg:Delivery_phantom}. In this algorithm, we keep substituting users with best channel conditions with phantom users until the global caching gain $\hat{t}$ becomes larger than some threshold $t_{\text{target}}$,\footnote{In this paper, we have selected the $t_{\text{target}}$ value by experimenting. Finding the optimal value for this parameter is left for future research.} while also checking that the real achievable DoF for the remaining users does \emph{not} fall below $\hat{t} + {\alpha}$. The following example clarifies this procedure. 

\begin{exmp}
\label{exmp:interference-free-delivery_phant}
Consider a network in which a transmitter with {two} antennas (${\alpha = L}$) serves four users, where users are experiencing $t_1 = t_2 = t_3 =3$ and $t_4 = 1$ at the given time slot. Consequently, we have $\CT_1 = \{ A_{234}\}, \
    \CT_2 = \{ B_{134}\}, \
    \CT_3 = \{ C_{124}\}, \
    \CT_4 = \{ D_{1}, D_{2}, D_{3} \}.$
% \begin{equation} 
%     \begin{aligned}
%     \CT_1 = \{ A_{234}\}, \qquad
%     \CT_2 = \{ B_{134}\}, \quad
%     \CT_3 = \{ C_{124}\}, \quad
%     \CT_4 = \{ D_{1}, D_{2}, D_{3} \}. 
%     \end{aligned}
% \end{equation}
Following Algorithm~\ref{Alg:Delivery} for the considered users’ distribution, $\hat{t} = 1$ results in four transmissions where three users are served during each transmission. %However, note that for the considered setting, cache ratios for the four users are $t = \{3,3,3,1\}$, respectively. 
However, following Algorithm~\ref{Alg:Delivery_phantom}, user $4$ is first excluded from the set of \ac{CC}-aided users (i.e., $\CK_p = \{4\}$), and hence, the common global caching gain and the potential DoF increase to $\bar{t} = \min_{k \in \{1,2,3\}}t_k = 3$ and $\bar{t}+{\alpha} = 5$, respectively. However, since the actual remaining number of \ac{CC}-aided users is three, the real achievable DoF, in this case, remains equal to three, but the excess spatial multiplexing gain could be used for an enhanced beamforming gain (and better rate). Thus, using Algorithm~\ref{Alg:Delivery_phantom}, the data delivery is done in two consecutive transmissions, i.e.,  $\bar{\Bx}_{123} = \Bv_{123} \left(A_{234}*B_{134}*C_{124}\right),$ and $\bar{\Bx}_{4} = \Bv_{4} \left(\prod(D_{1}, D_{2}, D_{3})\right)$. 
%\begin{equation}\nonumber 
%    \bar{\Bx}_{123} = \Bv_{123} \left(A_{234}*B_{134}*C_{124}\right),
%\end{equation}
%and 
%\begin{equation*}
%       \bar{\Bx}_{4} = \Bv_{4} \left(\prod(D_{1}, D_{2}, D_{3})\right).
%\end{equation*}
Note that $\bar{\Bx}_{123}$ and $\bar{\Bx}_{4}$ will be delivered to user sets $\CU_1 = \{1,2,3\}$ and $\CU_{2} = \{4\}$ interference-free, respectively. 
%Thus, potentially resulting in an overall improved performance.
\end{exmp}

\begin{algorithm}[t]
\small
    \caption{Phantom-based Multicast Content Delivery}
	\begin{algorithmic}[1]
		\Procedure{DELIVERY}{}
		
		\State $\hat{t} = \min_{k \in \CK} t_k$
		\If{$\hat{t} < t_{\text{target}}$}\label{alg:limiting_user_elimination_start}
		\State $\CK_p \gets \{ k \ | \ t(s_k) = \hat{t} \}$
		\If{$|\CK \setminus \CK_p| > \hat{t} + {\alpha}$} \label{alg:phantom_remaining_user_check}
		\State $\bar{t} = \min_{k \in \CK \setminus \CK_p} t_k$ \Else
		\State $\bar{t} \gets \hat{t}$ \& $\CK_p \gets \emptyset$
		\EndIf
		\EndIf\label{alg:limiting_user_elimination_end}
		\ForAll{$\Bar{\CK} \subseteq \CK : |\Bar{\CK}| = \bar{t} + {\alpha}$} %\label{alg:delivery_K_subsets}
		\If{$|\Bar{\CK} \setminus \CK_p| \geq {\alpha}$} \label{alg:phantom_DoF_check}
		\State $\Bx_{\Bar{\CK}} \gets 0$
		    \ForAll{$\CU \subseteq \Bar{\CK} : |\CU| = \bar{t}+1$}%\label{alg:delivery_U_subsets}
		        \State $x_{\CU} \gets 0$
		        \ForAll{$k \in \CU \setminus \CK_p$}
		            \State ${\varphi_k} \gets \binom{t_k}{\bar{t}}\binom{K-\bar{t}-1}{{\alpha}-1}$, ${G_{\CU,k}} \gets 0$, $\CU_{-k} \gets \CU \backslash \{k \}$
		            \ForAll{$\CV_k \subseteq \CK : |\CV_k| = t_k+1$}
		                \If{$\CU_{-k} \subseteq \CV_k$, $k \not\in \CV_k$} %\label{alg:delivery_tK_subsets}
		                \State $W_{\CV_k,k}^q \gets$ \textsc{Chunk}($W_{\CV_k,k}, {\varphi_k}$)
		                \State ${G_{\CU,k}} \gets$ \textsc{Concat} $ ({G_{\CU,k}}, W_{\CV_k,k}^q)$ 
		            \EndIf
		            \EndFor
		        \State $x_{\CU} \gets$ \textsc{Nest} $(x_{\CU}, {G_{\CU,k}}, R_k)$ 
		        \EndFor
		        \State $\Bx_{\Bar{\CK}} \gets \Bx_{\Bar{\CK}} + \Bv_{\CU}x_{\CU}$
		    \EndFor
		    \State Transmit $\Bx_{\Bar{\CK}}$
		    \EndIf
		\EndFor
		\EndProcedure
	\end{algorithmic}
	\label{Alg:Delivery_phantom}
\end{algorithm}

\section{Simulation Results}
\label{sec:Simulations}
The performance of the proposed location-dependent scheme is evaluated by numerical simulations. We consider an XR application in a bounded $30 \times 30 \mathrm{[m^2]}$ room, where a different 3D image is needed to rebuild the FoV in every tile of size $1 \times 1 \mathrm{[m^2]}$, resulting in $S = 900$ states. The requested data is served by a transmitter with ${L = 32}$ antennas {and spatial multiplexing gain of $\alpha\leq L$}, located in the middle of the room on the ceiling at the height of $5 \mathrm{[m]}$. The small-scale fading of the channel $\Bh_k$ %at state $s \in \mathcal{S}$
is assumed to follow Rayleigh distribution, while the path loss for a user at state $s \in [S]$ is modeled as~\cite{lodro2018statistical}: 
\begin{equation*}
    PL(s) = 32.4[dB]+20\log_{10}(f) + 10\eta\log_{10}(d_s) + \zeta,
\end{equation*}
where $d_s$ is the distance between the center of the state $s$ and the transmitter, $\eta = 3$ is the path-loss exponent, and $f$ is the frequency. The term $\zeta \sim \mathbb{N}(0, {\sigma})$ with standard deviation $\sigma$ is used to model the impact of randomly-placed objects obstructing the propagation path between the transmitter and the receivers, similar to the shadowing effect in outdoor propagation environments. To compute the state-specific expected throughput $r(s)$ for the initial cache placement in~\eqref{cache-allocation_dif_conv},
the expectation in~ \eqref{eq: state-rate} is taken over all possible user locations and channel realizations in state $s$.
Unless otherwise mentioned, we assume {that} the transmit power is set such that the received \ac{SNR} at the room borders is equal to  $0 \mathrm{[dB]}$ (ignoring the `shadowing' effect $\zeta$). During the delivery phase, we assume every user $k \in [K]$ can be located at any state $s \in [S]$ with uniform probability. In all simulations, we use optimized beamformers obtained by solving~\eqref{opt:nonconvex_WMMF}.

As discussed throughout the paper, the proposed location-dependent CC scheme applies to application scenarios such as XR gaming, where the QoS of users is {affected by} the delivery time. Therefore, we do not just compare the average delivery time (rate) as done in most related CC studies (e.g., \cite{tolli2017multi}) but also consider the 95-percentile of the expected delivery time as a figure of merit. The performance of the following placement and delivery schemes are compared:
\begin{itemize}
    \item \textbf{Proposed, $\pmb{\phi \gg \frac{{\alpha}}{K}}$,w/unicasting}\footnote{{Note that `w/' and `w/o' are used to abbreviate `with' and `without,' respectively.}} where the placement phase is carried out by setting $\phi \gg \frac{{\alpha}}{K}$ in Alg.~\ref{Alg:placement}, but delivery is done by unicasting (i.e., by using the spatial multiplexing gain only and ignoring coded caching techniques);
    \item \textbf{Proposed, $\pmb{\phi \gg \frac{{\alpha}}{K}}$}, where $\phi \gg \frac{{\alpha}}{K}$ is used in Alg.~\ref{Alg:placement} and delivery is performed using Alg.~\ref{Alg:Delivery_phantom};
    \item \textbf{Proposed, $\pmb{\phi = \frac{{\alpha}}{K}}$}, where $\phi = \frac{{\alpha}}{K}$  is used in Alg.~\ref{Alg:placement} (i.e., considering multicast content delivery already in the placement phase) and delivery is done using Alg.~\ref{Alg:Delivery_phantom};
    \item \textbf{MS}, where the cache placement is uniform (i.e., $\phi \ll \frac{{\alpha}}{K}$  is used in Alg.~\ref{Alg:placement}), and the baseline content delivery algorithm of~\cite{tolli2017multi} is used.
\end{itemize}

We first compare different schemes based on their delivery times accumulated over $500$ random user drops. Fig.~\ref{fig:sim_cdf} plots the cumulative distribution function (CDF) of the total delivery times for all realizations.
For reference, we have also added simulation results for two other schemes, which are very similar to \textit{Proposed, ${\phi \gg \frac{{\alpha}}{K}}$} and \textit{Proposed, ${\phi = \frac{{\alpha}}{K}}$} but use Algorithm~\ref{Alg:Delivery} for data delivery (i.e., they don't use phantom users to address the caching gain bottleneck as discussed in Sec.~\ref{Sec:phantom user}). As can be seen, incorporating phantom users always improves the performance; hence, throughout the rest of the text, we always assume {that} Algorithm~\ref{Alg:Delivery_phantom} is used for data delivery.
From Fig.~\ref{fig:sim_cdf} it is clear that the \textit{MS} scheme
has the largest variation in total delivery time among all the schemes, which is undesirable in our considered use cases with location-dependent content requests (e.g., XR gaming). The reason for this considerable variation is that the \textit{MS} scheme only intends to maximize the global caching gain, which results in better performance than other schemes when all the users have good channel conditions but deteriorates the rate when a subset of users experience poor connectivity. On the other hand, our proposed schemes provide a robust performance by keeping the variance in delivery time very small, with the \textit{Proposed, ${\phi = \frac{{\alpha}}{K}}$} scheme providing the best results. This robust performance is a direct result of the proposed non-uniform cache placement, as it makes the algorithm immune to wireless connectivity bottleneck areas by allocating more memory to store the content of such areas.
%It is worth mentioning that when $\phi = \frac{L}{K}$, the minimum allocated memory $\Bar{m}$ is increased. As a result, the dedicated memory to different states has less variation, resulting in higher global caching and fewer phantom user gains. Note that non-uniform cache placement is essential in limiting the delivery time variance; the multicasting delivery shifts the curves.

\begin{figure}
    \centering 
    \includegraphics[width=0.6\columnwidth,keepaspectratio]{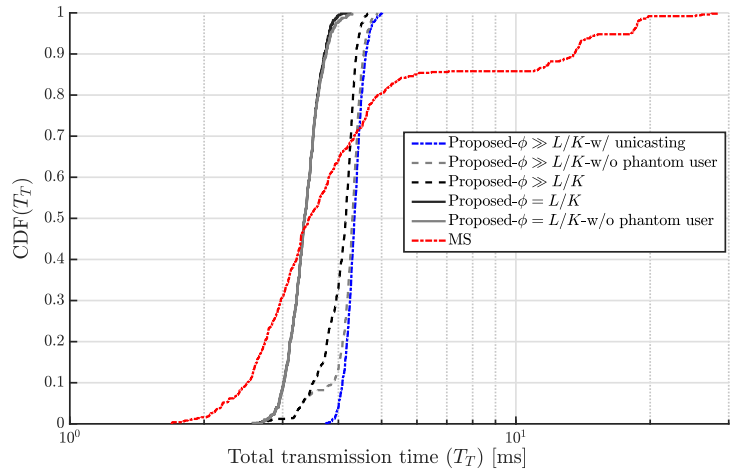}
    \caption{The cumulative distribution function (CDF) of total delivery time (logarithmic-scale) for $500$ realizations, where $K = 6, M/S = 0.33, {\alpha}=2$ and ${\sigma = 8}$.}
    \label{fig:sim_cdf}
\end{figure}

Figures~\ref{fig:sim_sigma_time} and \ref{fig:sim_sigma_95time} compare the performance of different schemes 
with respect to the standard deviation of obstructed locations parameter ($\sigma$). As illustrated, for small $\sigma$ (i.e., less variation in large-scale fading among states), the traditional \textit{MS} scheme outperforms other methods. This is because in our proposed schemes, we sacrifice the global caching gain (i.e., $t = \frac{KM}{S}$) for a higher local caching gain (i.e., $m_i$), which results in a 
better transmission rate for individual users (the $\frac{R_w}{R_u}$ ratio in~\eqref{eq:rate_ratio}) but reduces the number of users served simultaneously (the $\frac{\hat{t}+{\alpha}}{t+{\alpha}}$ ratio in~\eqref{eq:rate_ratio}). The rate improvement for individual users is more prominent when they experience relatively poor channel quality, which is not the case when $\sigma$ is small. However, as $\sigma$ becomes larger (i.e., there are more attenuated states), 
the \textit{MS} scheme performs worse than the proposed schemes. This is because, with larger $\sigma$, users are more likely to experience poor connectivity, increasing the effectiveness of the local caching gain in decreasing the total delivery time. 
It should be noted that in both {Figures}~\ref{fig:sim_sigma_95time} and~\ref{fig:sim_sigma_time}, if $\sigma > {10}$, the \textit{Proposed, $\phi \gg \frac{{\alpha}}{K}$} outperforms all other schemes. This is because, in that regime, the large variety in the expected achievable rate of different states forces the memory allocation to become more non-uniform. As a result, the minimum achievable global caching gain (i.e., $\hat{t}$) becomes very small, and we need to rely more on phantom users to resolve the imbalance global caching gain bottleneck (c.f. Section~\ref{Sec:phantom user}). However, phantom users work better with the \textit{Proposed, $\phi \gg \frac{{\alpha}}{K}$} scheme as it allows even more non-uniform memory allocations than the \textit{Proposed, $\phi = \frac{{\alpha}}{K}$} scheme.

%Note that  as $\sigma$ grows, the minimum achievable global caching gain (i.e., $\hat{t}$) achieved by the proposed scheme decreases. Thus, at some point (e.g., $\sigma > 6$ in Figs.~\ref{fig:sim_sigma_95time} and~\ref{fig:sim_sigma_time}), the $\hat{t}$ value falls below one and the global caching gains vanishes. Thus, it can be concluded that when the minimum achievable global caching gain is the same for both $\phi = \frac{L}{K}$ and $\phi \gg \frac{L}{K}$ cache placement schemes, the $\phi \gg \frac{L}{K}$ scheme dominates all other methods due to higher phantom user gain. 

\begin{figure*}[tb]
\begin{minipage}[c]{0.49\textwidth}
\centering 
\setlength\abovecaptionskip{-0.25\baselineskip}
\includegraphics[width=1\columnwidth,keepaspectratio]{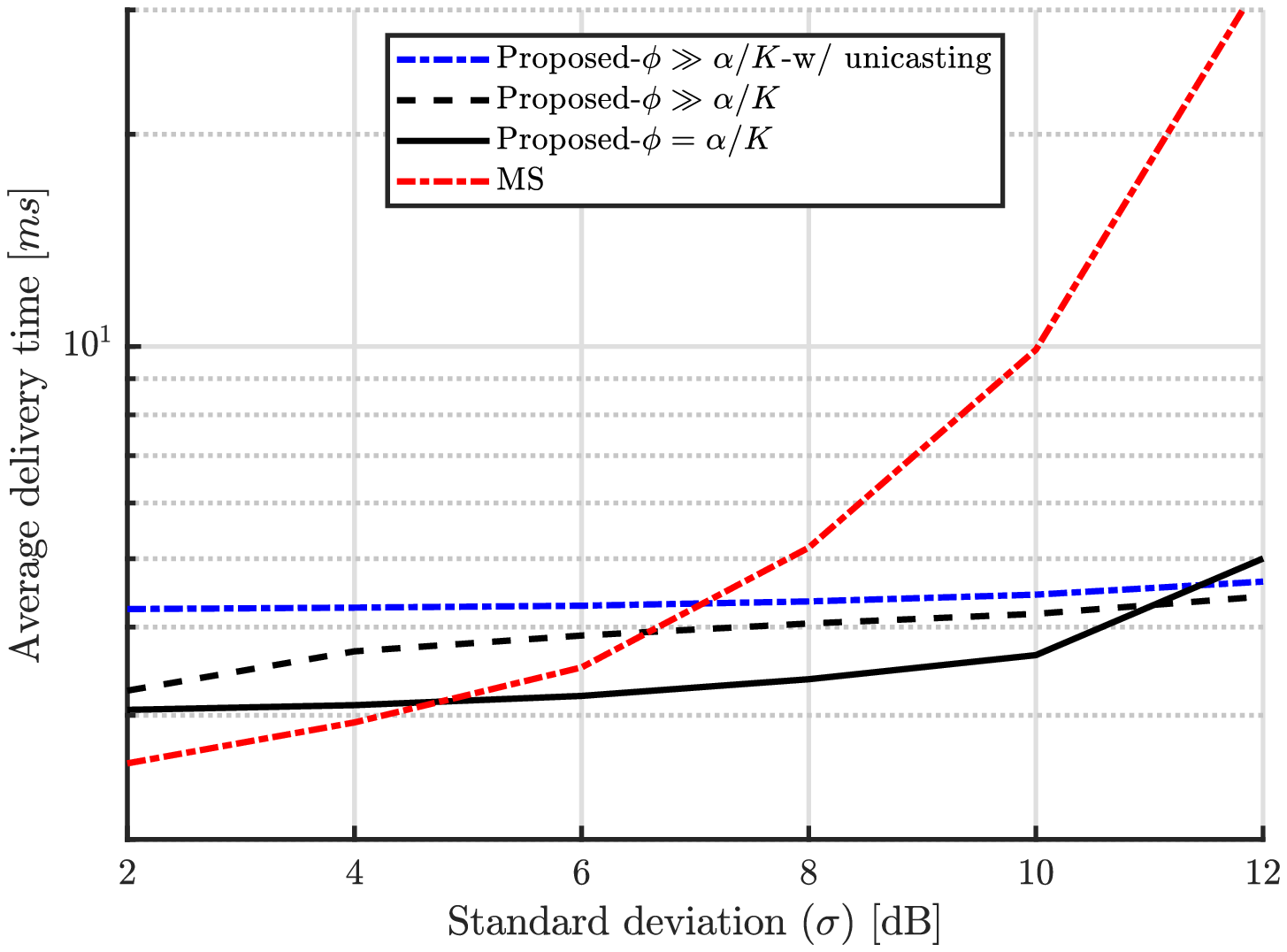}
\caption{Delivery time (logarithmic-scale) versus the standard deviation ($\sigma$), where $K = 6, M/S = 0.33$, and ${\alpha}=2$.}
\label{fig:sim_sigma_time}
\end{minipage}
\hspace{2mm}
\begin{minipage}[c]{0.49\textwidth}
\centering 
\setlength\abovecaptionskip{-0.25\baselineskip}
\includegraphics[width=1\columnwidth,keepaspectratio]{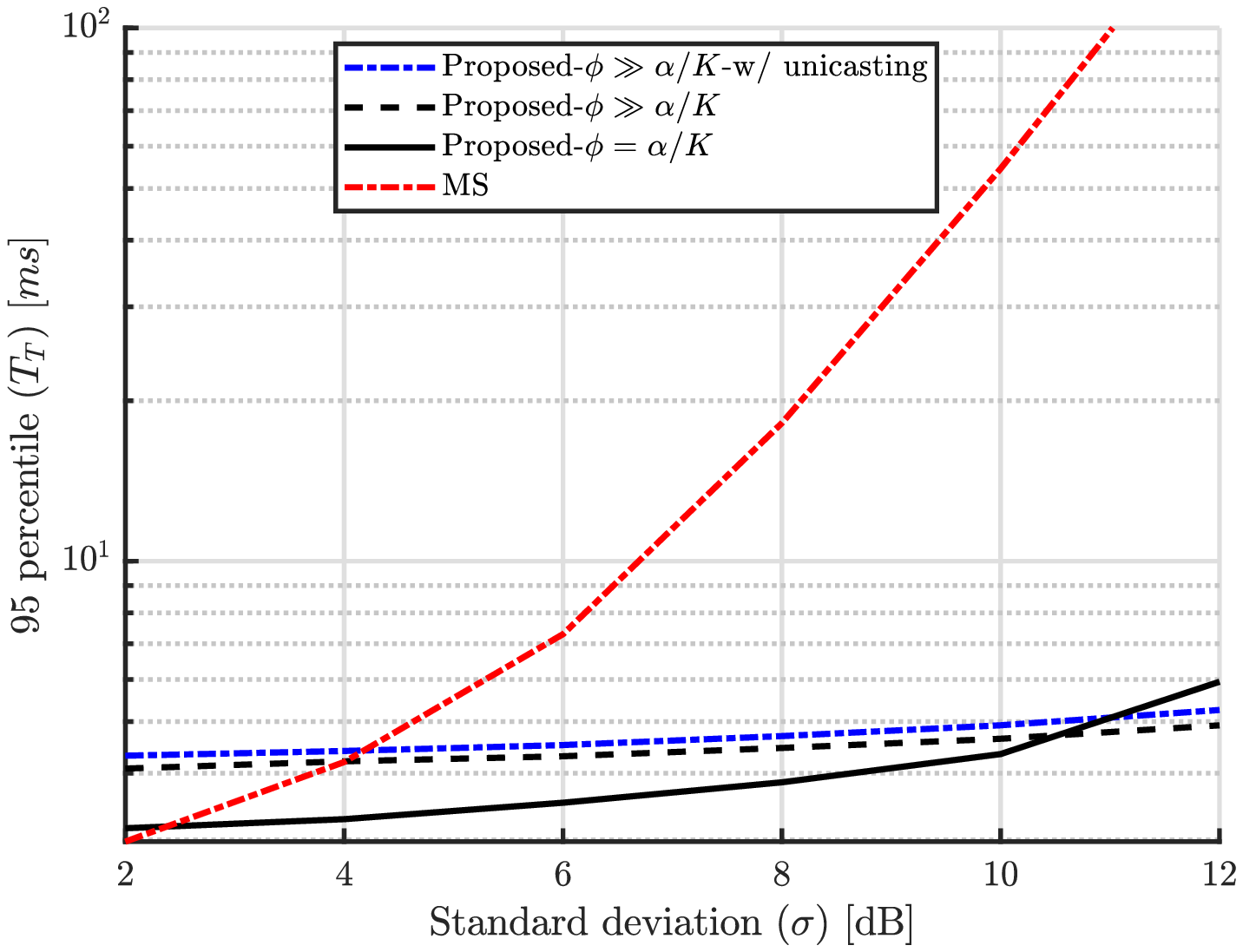}
\caption{Delivery time (logarithmic-scale) versus $\sigma$, where $K = 6, M/S = 0.33$, and ${\alpha}=2$.}
\label{fig:sim_sigma_95time}
\end{minipage}
\end{figure*}

Figures~\ref{fig:sim_power_95time} and~\ref{fig:sim_L_95time} compare the performance of different methods with respect to the SNR value at the room border and the {spatial multiplexing gain} ${\alpha}$, respectively. As shown in Fig.~\ref{fig:sim_power_95time}, for smaller SNR values at the room border, the performance gap between the proposed schemes and the \textit{MS} scheme widens. This is because with smaller SNR values (i.e., smaller transmit power), the achievable rate in different states gets highly affected by large- and small-scale fading, resulting in larger variations in the achievable rate of different users.
As a result, the proposed schemes that use non-uniform cache placement to compensate for such variations perform better than the \textit{MS} scheme. On the other hand, {for higher} received SNR at the cell edge, {performance gap between \textit{MS} and other schemes decreases} as the uniform memory allocation becomes {almost} optimal. Also, as illustrated in~Fig.~\ref{fig:sim_L_95time}, for a larger {spatial multiplexing gain},
our proposed schemes perform better than \textit{MS}. The reason is that, with a larger ${\alpha}$, the DoF value gets less sensitive to the global caching gain, i.e., the $\frac{\hat{t}+{\alpha}}{t+{\alpha}}$ ratio in~\eqref{eq:rate_ratio} converges to one. As a result, the rate improvements for individual users due to the increased local caching gain of the proposed schemes become more effective in reducing the total delivery time.\footnote{{Note that in Fig. 8, as the number of users is fairly small, it quickly limits the DoF $\min\{\alpha+\hat{t}, K\}$ as $\alpha$ is increased. For example, for the \textit{MS} scheme, the DoF is capped at six when $\alpha \geq 4$. 
%Moreover, for the $\alpha = 6$ case, the location-dependent CC-based schemes and the unicasting scheme perform almost the same. This is because they achieve the same DoF, and the most influential factor, in this case, is the local caching gain.
}} 
\begin{figure*}[tb]
\begin{minipage}[c]{0.5\textwidth} 
\centering 
\setlength\abovecaptionskip{-0.25\baselineskip}
\includegraphics[width=1\columnwidth,keepaspectratio]{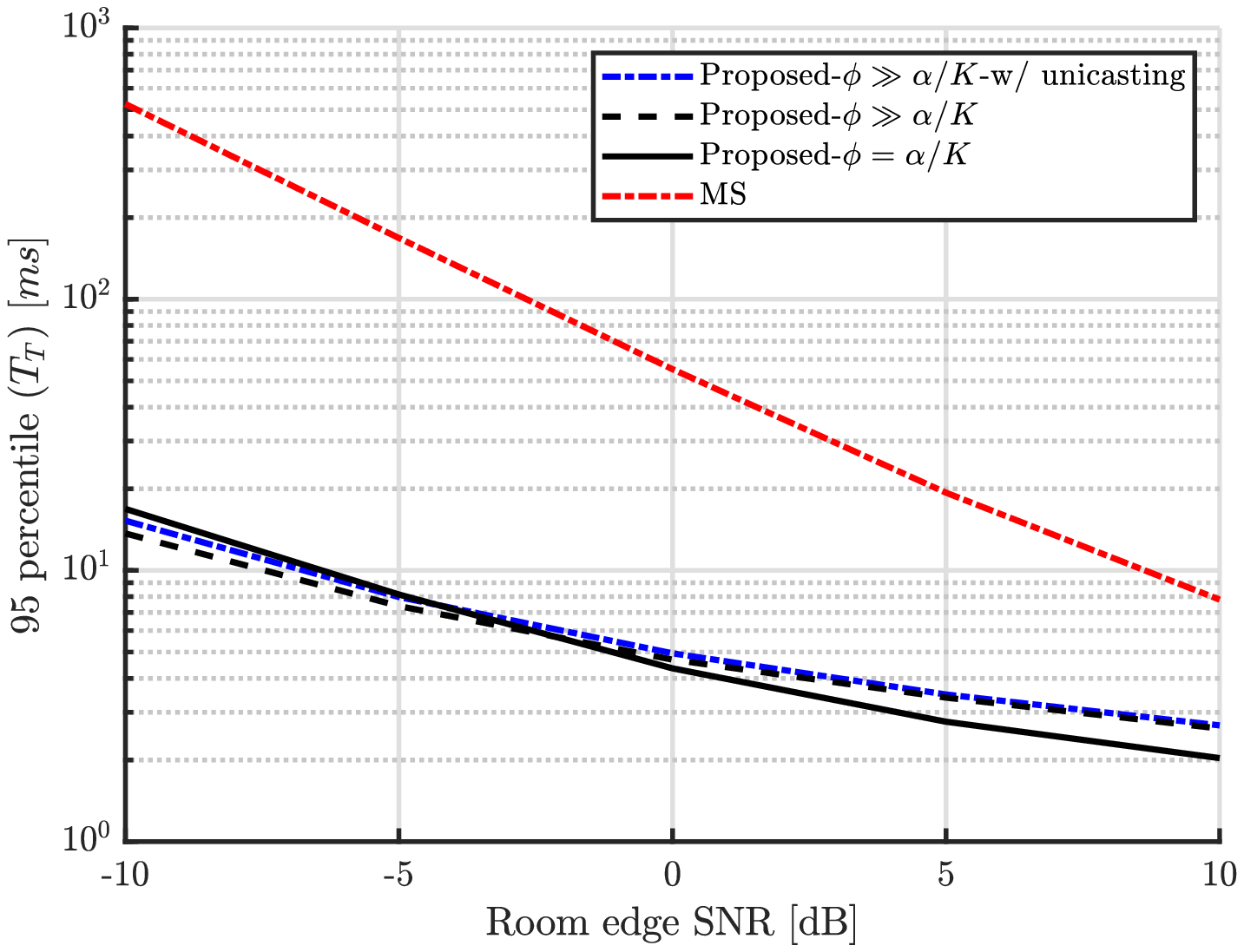}
\caption{95 percentile delivery time {(logarithmic-scale)} versus different room edge SNR,  {for} $K = 6, M/S = 0.33, {\sigma=10}$, and {$\alpha=2$}.}
\label{fig:sim_power_95time}
\end{minipage}
\hspace{1mm}
\begin{minipage}[c]{0.5\textwidth} 
\centering 
\setlength\abovecaptionskip{-0.25\baselineskip}\includegraphics[width=1\columnwidth,keepaspectratio]{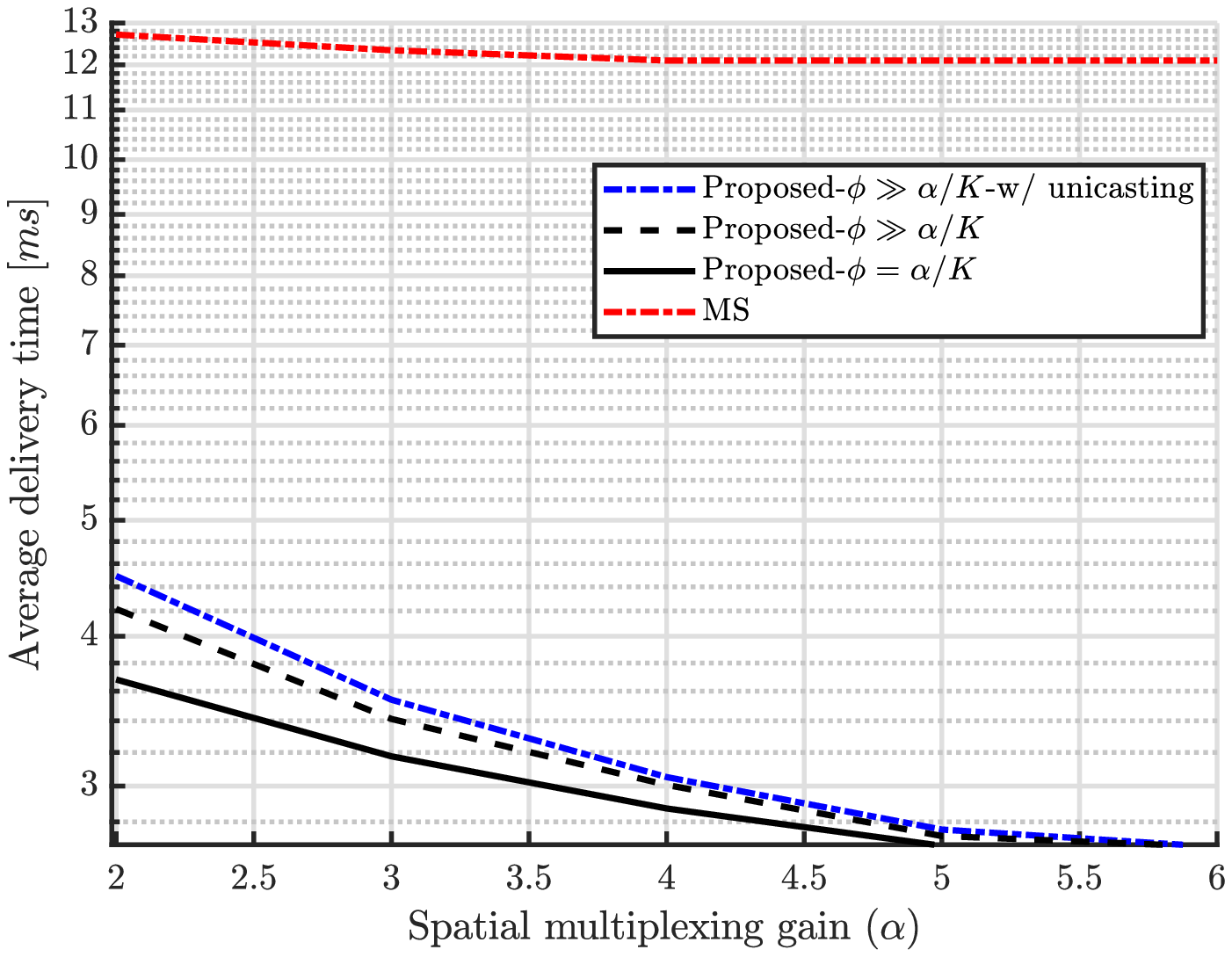}
\caption{Delivery time {(logarithmic-scale)} versus the {spatial multiplexing gain ($\alpha$)}, where $K = 6, M/S = 0.33$, and ${\sigma=10}$.}
\label{fig:sim_L_95time}
\end{minipage}
\end{figure*}  

In Fig.~\ref{fig:sim_M_95time}, we have compared the performance of different schemes with respect to the (normalized) available cache memory at the users $\frac{M}{S}$. As depicted, the performance gap between the proposed schemes and the \emph{MS} scheme 
{grows rapidly at first but then narrows as $\frac{M}{S}$ is increased. The reason for this behavior is that when $\frac{M}{S}$ is small, performance improvement is due to the local caching gain; i.e., all the available memory is used to cover wireless connectivity bottlenecks. However, as more memory becomes available, we reach a point where all the bottleneck areas are covered, and cached data starts to be used to improve the global caching gain. In fact, as the amount of available memory grows,}
%narrows as $\frac{M}{S}$ is increased.
%for moderate amount of the performance gap is higher. 
%This is because the global caching gain gets larger as the available cache grows, resulting in lower DoF loss compared to the uniform memory allocation in the \textit{MS} scheme (i.e., $\frac{\hat{t}+L}{t+L}$ approaches to one with larger $M$). However, as the available memory grows further, the performance gap decreases, because in a large $M$ regime uniform memory allocation is almost optimal. 
%
%the performance gap between the proposed schemes and the \emph{MS} scheme narrows as $\frac{M}{S}$ is increased. 
%This is because with larger available cache memory, 
the result of the memory allocation process in~\eqref{cache-allocation} gets closer to the uniform allocation of the \textit{MS} scheme.
%{Moreover, most of the performance boost occurs within the small available memory region, i.e., $M/S \sim 0.1$. This is because even with small available memory, states with poor connectivity conditions are prioritized for memory allocation. In such situations, the global caching gain is mainly equal to zero because states with relatively good connectivity conditions are not allocated memory. Hence, most of the performance boost is due to local caching gain in these regions. However, as the available memory increases, all the states are allocated a fraction of the available memory. This results in an increased global caching gain and better performance than the traditional unicasting scheme.}
%
Finally, Fig.~\ref{fig:sim_K_95time} shows how the user count parameter $K$ affects the delivery time of various schemes. As depicted, since a larger $K$ also means more data to be delivered, the delivery time generally grows with the number of users. However, since the global caching gains also scale with $K$, the number of users served in parallel (i.e., $\hat{t}+{\alpha}$) also increases for larger $K$, resulting in an overall performance improvement for all the {CC-based} schemes. Still, the \textit{Proposed, $\phi = \frac{{\alpha}}{K}$} scheme provides the best performance among all the schemes, and its required delivery time is affected minimally by the increase in $K$. This is because, with larger $K$, there is a higher chance of having users with poor connectivity, increasing the effectiveness of the improved local caching gain resulting from the underlying non-uniform cache placement.
%the performance gap between the proposed schemes and the \textit{MS} scheme narrows as $K$ is increased. This is because for the \textit{MS} scheme, the global caching gain $t=\frac{KM}{S}$ scales linearly with $K$, while for the proposed schemes, it is limited by the $\min$ operation in $\hat{t} = \min_k t_k$ (and with more users, there is a higher chance of having users with poor connectivity).
%
%Finally, the global caching gain $\hat{t} = K\min_{k \in [K]}m(s_k)$ depends also on the number of users $K$. 
%It is worth mentioning that due to the complex beamforming procedure for the multiserver-based schemes (such as the one proposed in this paper), we have considered a limited number of users in the network.
%However, by increasing the number of users in the network, the common global coded caching gain increases, which results in a higher performance gap compared to other methods, as shown in Fig.~\ref{fig:sim_K_95time}. As depicted, the delivery time grows with the number of users. However, since states' global caching gains also scale linearly with $K$, the number of users served in parallel (i.e., $\hat{t}+L$) also increases for larger $K$, resulting in overall improved performance for all methods. Note that as the number of users increases, $\frac{\hat{t}+L}{t+L}$ converges to one; at the same time, the chance of having users with bad channel conditions also increases with the number of users, limiting the \emph{MS} scheme performance and resulting in a higher $\frac{R_w}{R_u}$ in~\eqref{eq:rate_ratio}.

\begin{figure*}[tb]
\begin{minipage}[c]{0.5\textwidth}
\centering 
\setlength\abovecaptionskip{-0.25\baselineskip}
\includegraphics[width=1\columnwidth,keepaspectratio]{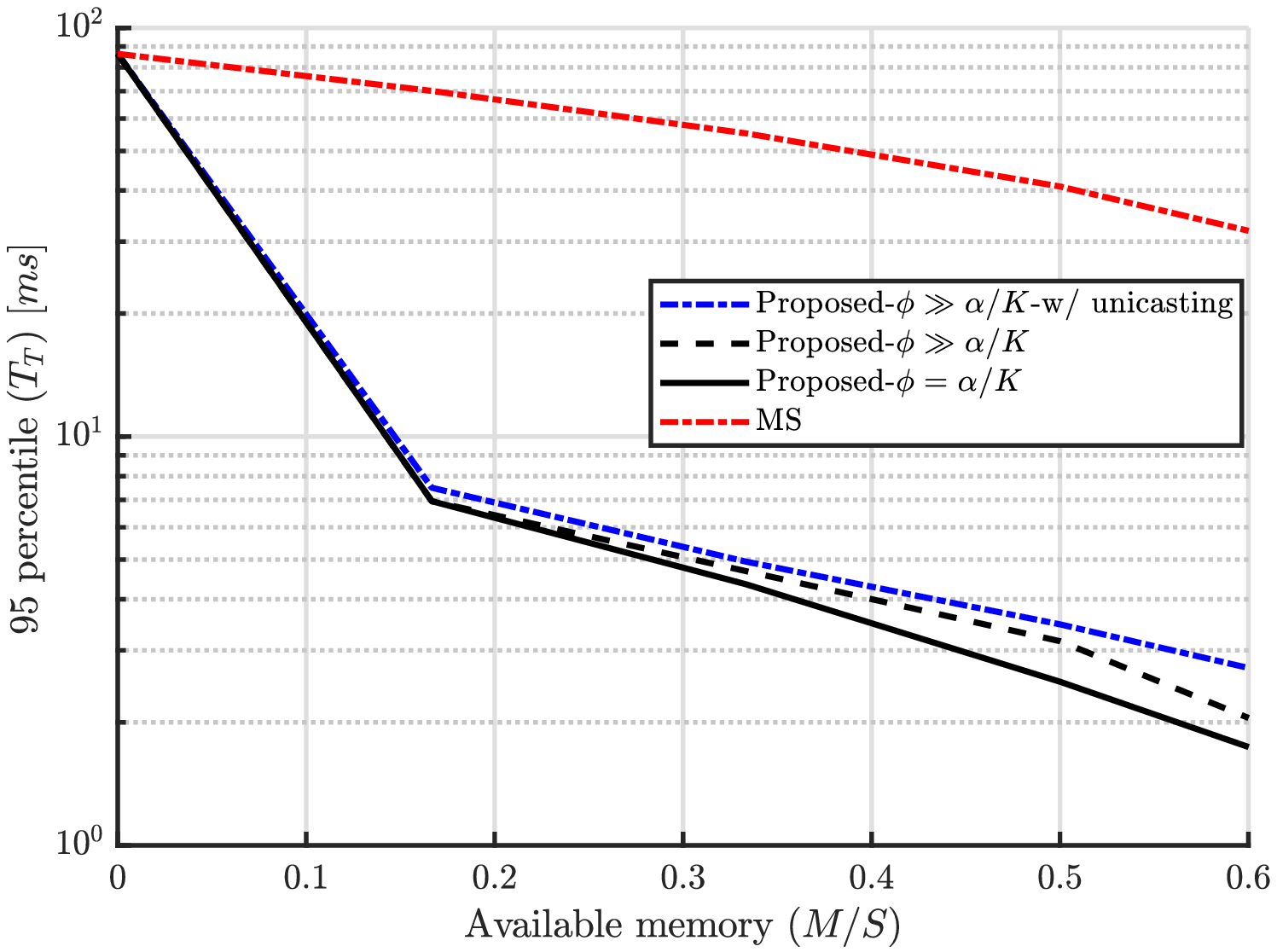}
\caption{Delivery time {(logarithmic-scale)} versus memory size ($M/S$), where $K = 6, {\sigma=10}$, and ${\alpha=2}$.}
\label{fig:sim_M_95time}
\end{minipage}
\hspace{2mm}
\begin{minipage}[c]{0.5\textwidth}
\centering 
\setlength\abovecaptionskip{-0.25\baselineskip}
\includegraphics[width=1\columnwidth,keepaspectratio]{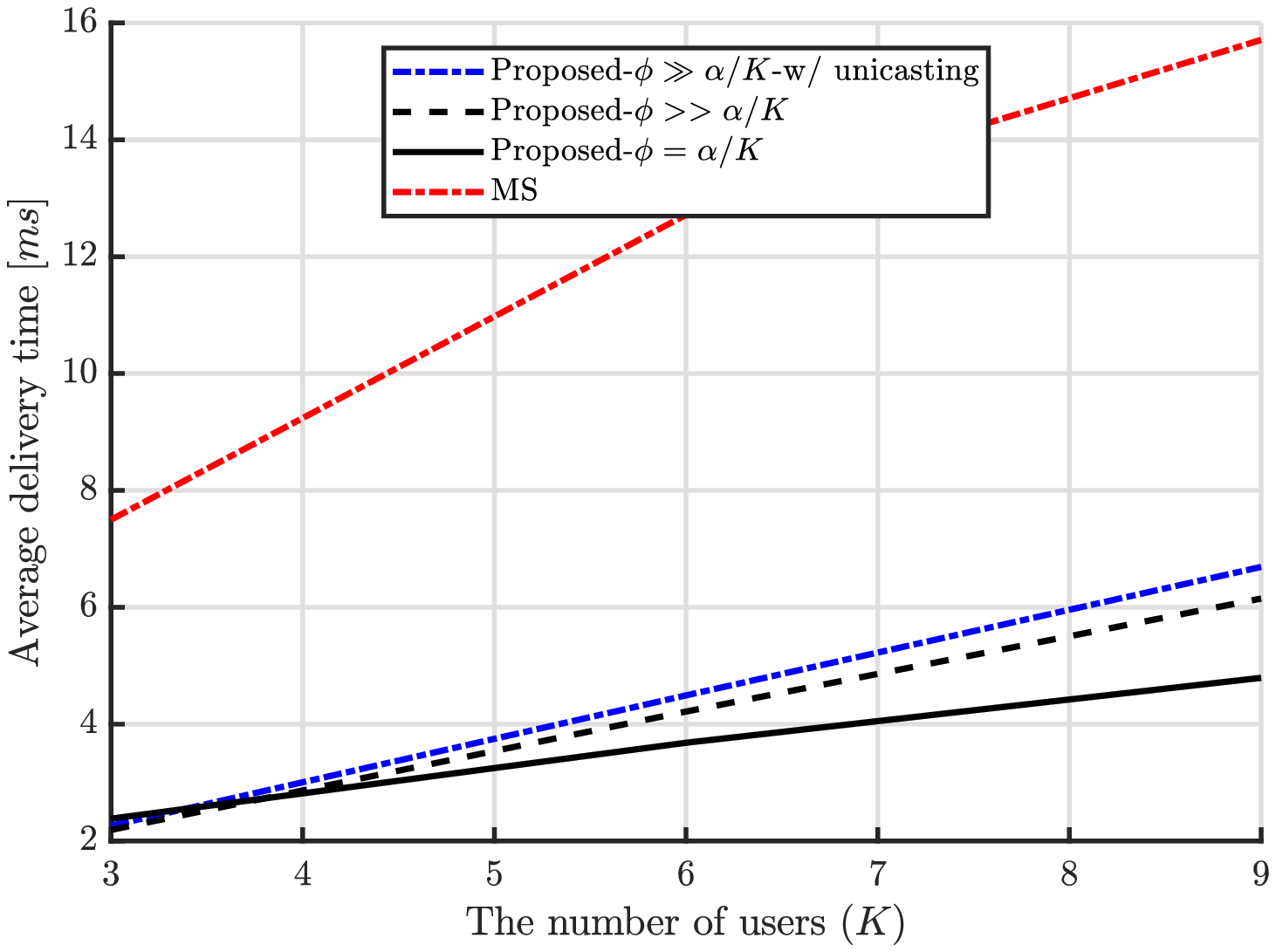}
\caption{Average delivery time versus user count ($K$), where $M/S = 0.33,  {\sigma=10}$, and ${\alpha=2}$.}
\label{fig:sim_K_95time}
\end{minipage}
\end{figure*} 

\section{Conclusion and Future Work}
\label{sec:conclusions}
 
A centralized location-dependent coded caching scheme with multi-rate content delivery, tailored for future XR applications, was proposed in this paper. Initially, the area of interest was divided into many small states such that the achievable rate at every point in each state could be considered the same. Then, based on each state's approximated achievable delivery rate, a memory loading process was performed to reduce the burden on wireless resources to serve ill-conditioned locations. This resulted in an uneven memory allocation, where larger cache portions were allocated to the contents requested in poor-connectivity states. Then, during the content delivery phase, a novel algorithm based on coded caching and multi-rate transmission techniques were devised to enable combined global caching and spatial multiplexing gains at the transmitter. Finally, the proposed method was shown to outperform the state-of-the-art in ill-conditioned scenarios where the ratio between the best and worst channel conditions was large. Future research opportunities include supporting multiple transmitters, {imperfect channel knowledge during the placement and delivery phases,} incorporating side information on user movement patterns and state transition probabilities, and considering a more dynamic scenario where the users' cache content is updated as they move throughout the application environment.

\appendices
\section{Non-integer global coded caching gain} \label{sec: Appendix A}
For the non-integer \emph{global coded caching gain} case (i.e., the case where $t(s) = Km(s)$ is non-integer), the proposed memory-sharing scheme in~\cite{MaddahAli-2014} is adopted for the cache arrangement:
\begin{itemize}
    \item[1.] The content file for state $s$, i.e., $W(s)$, is first divided into two non-overlapping parts $W_1(s)$ and $W_2(s)$, where $|W_1(s)| = \left(\lfloor t(s) \rfloor +1 - t(s)\right)|W(s)|$ and $|W_2(s)| = \left(t(s)-\lfloor t(s) \rfloor\right)|W(s)|$.
    \item[2.] Each user caches the data of $W_1(S)$ with integer $\underline{t}(s) = \lfloor t(s) \rfloor$, and data of $W_2(S)$ with integer $\bar{t}(s) = \lfloor t(s) \rfloor +1$, following the cache placement scheme in {Section}~\ref{sec:cache_placement}.
\end{itemize}
It is easy to verify that the proposed memory-sharing process does not violate the cache constraint, i.e., $\frac{\binom{K-1}{\underline{t}(s)-1}}{\binom{K}{\underline{t}(s)}}(\underline{t}(s)+1-t(s)) + \frac{\binom{K-1}{\underline{t}(s)}}{\binom{K}{\underline{t}(s)+1}}(t(s)-\underline{t}(s)) = \frac{t(s)}{K} = m(s)$. The common global coded caching gain is first computed as $\hat{t} = \min_{k \in \CK} \lfloor t_k \rfloor$ for the delivery phase. Then, the same file division procedure is followed for each file part $W_1(s_k)$ and $W_2(s_k)$ separately. To this end, every \textit{sub-file} of $W_1(s_k)$ and $W_2(s_k)$ is divided into $\alpha^1_k = \binom{K-\hat{t}-1}{L-1}\binom{\underline{t}_k}{\hat{t}}$ and $\alpha^2_k = \binom{K-\hat{t}-1}{L-1}\binom{\bar{t}_k}{\hat{t}}$ smaller segments, respectively. Then,  $\chi^1_k = \binom{K-\hat{t}-1}{\underline{t}_k - \hat{t}}$ file segments of $W_1(s)$ and $\chi^2_k = \binom{K-\hat{t}-1}{\bar{t}_k - \hat{t}}$ file segments of $W_2(s)$ are concatenated together to create a packet for user $k$. The following example clarifies the memory-sharing process.
\begin{exmp}
\label{exmp: non-int-t}
Consider a network similar to the one in example~\ref{exmp:placement}, where the only difference is $t(1) = 1.2$. As a result, $W(1)$ is first divided into $W_1(1)$ and $W_2(1)$, with size $|W_1(1)| = 0.8*400 = 320$ Megabytes and $|W_2(1)| = 0.2*400 = 80$ Megabytes. Then, $W_1(1)$ and $W_2(1)$ are cached based on the corresponding tables for $s = 1$ and $s = 2$ in Figure~\ref{fig:cache pool}. Now, consider the same location realization for the users as in example~\ref{exmp:interference-free-delivery}. The missing file parts at user $1$ in this case are $\CT_1 = \{A_{1,2}, A_{1,3}, A_{1,4}, A_{2,23}, A_{2,24}, A_{2,34}\}$. Then, based on the procedure mentioned above, the {the first} transmission vector is now built as
 \begin{equation}\small \nonumber
    \begin{aligned} 
        \Bx_{123} =& \Bv_{12} \left(\prod(A_{1,2}^{1}, A_{2,23}^{1}, A_{2,24}^{1}) *\prod(B_{13}^{1} , B_{14}^{1})\right) + \Bv_{13} \left(\prod(A_{1,3}^{1}, A_{2,23}^{2}, A_{2,34}^{1}) *\prod(C_{12}^{1} , C_{14}^{1})\right) \\ & \quad \quad + \Bv_{23} \left(\prod(B_{13}^{2} , B_{34}^{1}) *\prod(C_{12}^{2} , C_{24}^{1})\right). 
        % \\
        % \Bx_{124} =& \Bv_{12} \left(\prod(A_{1,2}^{2}, A_{2,23}^{3}, A_{2,24}^{2}) *\prod(B_{13}^{3} , B_{14}^{2})\right) + \Bv_{14} \left(\prod(A_{1,4}^{1}, A_{2,24}^{3}, A_{2,34}^{2}) * D_{1}^{1}\right) \\ & \quad \quad + \Bv_{24} \left(\prod(B_{14}^{3} , B_{34}^{2}) *D_{2}^{1}\right), 
        % \\
        % \Bx_{134} =& \Bv_{13} \left(\prod(A_{1,3}^{2}, A_{2,23}^{4}, A_{2,34}^{3}) *\prod(C_{12}^{3} , C_{14}^{2})\right) + \Bv_{14} \left(\prod(A_{1,4}^{2}, A_{2,24}^{4}, A_{2,34}^{4}) *D_{1}^{2}\right) \\ & \quad \quad + \Bv_{34} \left(D_{3}^{1} *\prod(C_{14}^{3} , C_{24}^{2})\right), 
        % \\ \Bx_{234} =& \Bv_{23} \left(\prod(B_{13}^{4}, B_{34}^{3}) *\prod(C_{12}^{4}  , C_{24}^{3})\right) + \Bv_{24}  \left(D_2^{2} *\prod(B_{14}^{4}  , B_{34}^{4})\right) \\ & \quad \quad + \Bv_{34} \left(D_{3}^{2} *\prod(C_{14}^{4} , C_{24}^{4})\right),
    \end{aligned}
\end{equation}
{The remaining three messages $\Bx_{124}, \Bx_{134}$, and $\Bx_{234}$ are also built similarly.}
 \end{exmp}

Finally, following a similar argument as in Theorem~\ref{theorm: delivery}, we can show that for a non-integer $t(s_k)$, user $k$ receives its missing files entirely. Specifically, each transmitted packet to user $k$ contains $\chi^1_k$ file segments from $W_1(s_K)$, each with size $ {1}/{\binom{K}{\underline{t}_k}\alpha^1_k}$ data units, and $\chi^2_k$ file segments from $W_2(s_K)$, each with size $ {1}/{\binom{K}{\bar{t}_k}\alpha^2_k}$ data units. As a result, considering the total number of received packets (i.e., $\binom{K-1}{\hat{t}+L-1}\binom{\hat{t}+L-1}{\hat{t}})$, user $k$ is served by the following data size

\begin{equation}\nonumber
    \frac{ \binom{K-1}{\hat{t}+L-1}\binom{\hat{t}+L-1}{\hat{t}} \binom{K-\hat{t}-1}{\underline{t}_k - \hat{t}} }{ \binom{K}{\underline{t}_k} \binom{\underline{t}_k}{\hat{t}}  \binom{K-\hat{t}-1}{L - 1}}(\underline{t}_k+1-t_k) + \frac{ \binom{K-1}{\hat{t}+L-1}\binom{\hat{t}+L-1}{\hat{t}} \binom{K-\hat{t}-1}{\bar{t}_k - \hat{t}} }{ \binom{K}{\bar{t}_k} \binom{\bar{t}_k}{\hat{t}}  \binom{K-\hat{t}-1}{L - 1}}(t_k-\underline{t}_k) = \frac{K-t_k}{K} = 1 - m(s_k) \; .
\end{equation}

%\bibliographystyle{IEEEtran}
%\bibliography{references}

% Generated by IEEEtran.bst, version: 1.12 (2007/01/11)

\end{document}